\documentclass[11pt, a4paper, reqno]{article}
\usepackage{amssymb}
\usepackage{amsmath}
\usepackage{amsthm,thmtools}
\usepackage{mathtools}
\usepackage{mathrsfs}
\usepackage[T1]{fontenc}
\usepackage[dvipsnames]{xcolor}
\usepackage{combelow}
\usepackage{enumitem}
\setenumerate{label={\rm (\alph{*})}}
\usepackage[
    colorlinks=true,
    citecolor=PineGreen,
    urlcolor=CadetBlue,
    linkcolor=RedViolet]{hyperref}
\usepackage[nameinlink,capitalize]{cleveref}
\usepackage{graphicx}
\usepackage{tikz}
\usepackage{tikz-cd}
\usepackage{pgfplots}
\pgfplotsset{compat=newest}
\usepackage[margin=1cm]{caption}
\usepackage{geometry}
\usepackage[backend=biber,style=alphabetic, 
maxalphanames = 4, 
  minalphanames = 3, 
  maxbibnames=99, 
]{biblatex}

\usepackage{algorithm}
\usepackage[noend]{algpseudocode}
\algrenewcommand{\algorithmicrequire}{\textsc{Input:}}
\algrenewcommand{\algorithmicensure}{\textsc{Output:}}

\usepackage{soul}
\usepackage{verbatim}
\usepackage{subcaption}

\numberwithin{equation}{section}
\newtheorem{theorem}{Theorem}[section]
\newtheorem{lemma}[theorem]{Lemma}
\newtheorem{proposition}[theorem]{Proposition}
\newtheorem{corollary}[theorem]{Corollary}
\theoremstyle{definition}
\newtheorem{definition}[theorem]{Definition}

\newtheorem{examplex}[theorem]{Example}
\newenvironment{example}
{\pushQED{\qed}\examplex}
{\popQED\endexamplex}
\theoremstyle{remark}
\newtheorem{remark}[theorem]{Remark}

\newcommand{\conv}{\operatorname{conv}}
\newcommand{\R}{\mathbb{R}}

\newcommand{\Q}{\mathbb{Q}}
\newcommand{\Sph}{\mathbb{S}}

\newcommand{\vol}{\operatorname{vol}}
\newcommand{\verts}{\operatorname{vert}}

\newcommand{\gr}{\operatorname{Gr}}
\newcommand{\graff}{\operatorname{Gr}_{\mathrm{aff}}}
\DeclareMathOperator{\depth}{depth}
\DeclareMathOperator{\capvol}{CapVol}
\newcommand{\HS}{\mathcal{HS}}

\newcommand{\TFNP}{\textsf{TFNP}}
\newcommand{\PPA}{\textsf{PPA}}

\newcommand{\AlphaHS}{\ensuremath{\alpha}\textsc{-HS}}
\newcommand{\HamSandwich}{\textsc{Ham-Sandwich}}

\newcommand\blfootnote[1]{%
  \begingroup
  \renewcommand\thefootnote{}\footnote{#1}%
  \addtocounter{footnote}{-1}%
  \endgroup
}

\title{\vspace{-1cm}\textbf{\textsc{From Ham-Sandwich to Centerpoints:\\ Semialgebraic Algorithms for Cutting Polytopal Measures}}}
\author{Marie-Charlotte Brandenburg, Jes\'us A. De Loera, and Chiara Meroni}
\date{}

\begin{document}

\maketitle
\begin{abstract}
We design exact algorithms for the ham-sandwich and centerpoint theorems for polytopal measures. Our key observation is that the cap-volume function of such a measure, i.e., the volume cut off by a halfspace, is piecewise rational on a natural decomposition of the space of oriented hyperplanes. This lets us recast prescribed-proportion cutting problems as semialgebraic feasibility problems. For fixed ambient dimension, this yields polynomial-time algorithms to decide the existence of cuts, describe the full solution set, and sample or enumerate solutions. We extend this framework to the center transversal theorem, showing that spaces of deep affine flats are semialgebraic, which holds for centerpoints. We further show that the set of centerpoints of a convex polytope coincides with its floating body at level $1/(d+1)$, a useful semialgebraic description.
\end{abstract}

\noindent
\blfootnote{
\textbf{Keywords:} ham-sandwich theorem, centerpoint theorem, center transversal theorem, polyhedral geometry, computational real algebraic geometry, floating bodies, equipartitions}
\blfootnote{
\textbf{MSC classes:} {Primary 52B55;  
Secondary 52A35,  
52A38,  
14P10,  
14Q30,  
68Q25,  
68W30.}  
}

\vspace{-0.5cm}
\section{Introduction}
In this article, we develop computational versions of 
the \emph{ham-sandwich theorem} \cite{StoneTukey1942} and the \emph{centerpoint theorem} \cite{Rado46:TheoremGeneralMeasure}. These classical results establish the existence of hyperplanes that partition measures, or point sets, into balanced pieces. Together, they underlie a remarkably rich line of work in computational geometry on efficient equipartitioning, with exciting direct applications to economics, fair-division problems, gerrymandering, and statistics \cite{BramsTaylor1996,Su1999,Soberon2017,filosratsikasGoldberg2023,Fojtik02042024}.

In this paper, we discuss algorithmic variations of these two classical problems. The difficulty is already visible in the ham-sandwich setting illustrated in \Cref{fig:intro}: given three ingredients, say bread, ham, and cheese, can one cut the sandwich with a single plane so that each ingredient is divided exactly in half? A key contribution of this paper is an algorithm that solves this tasty sandwich problem and finds all possible fair bisections; see \Cref{fig:intro}.

\begin{figure}[hb]
    \centering
    \includegraphics[height=4.0cm]{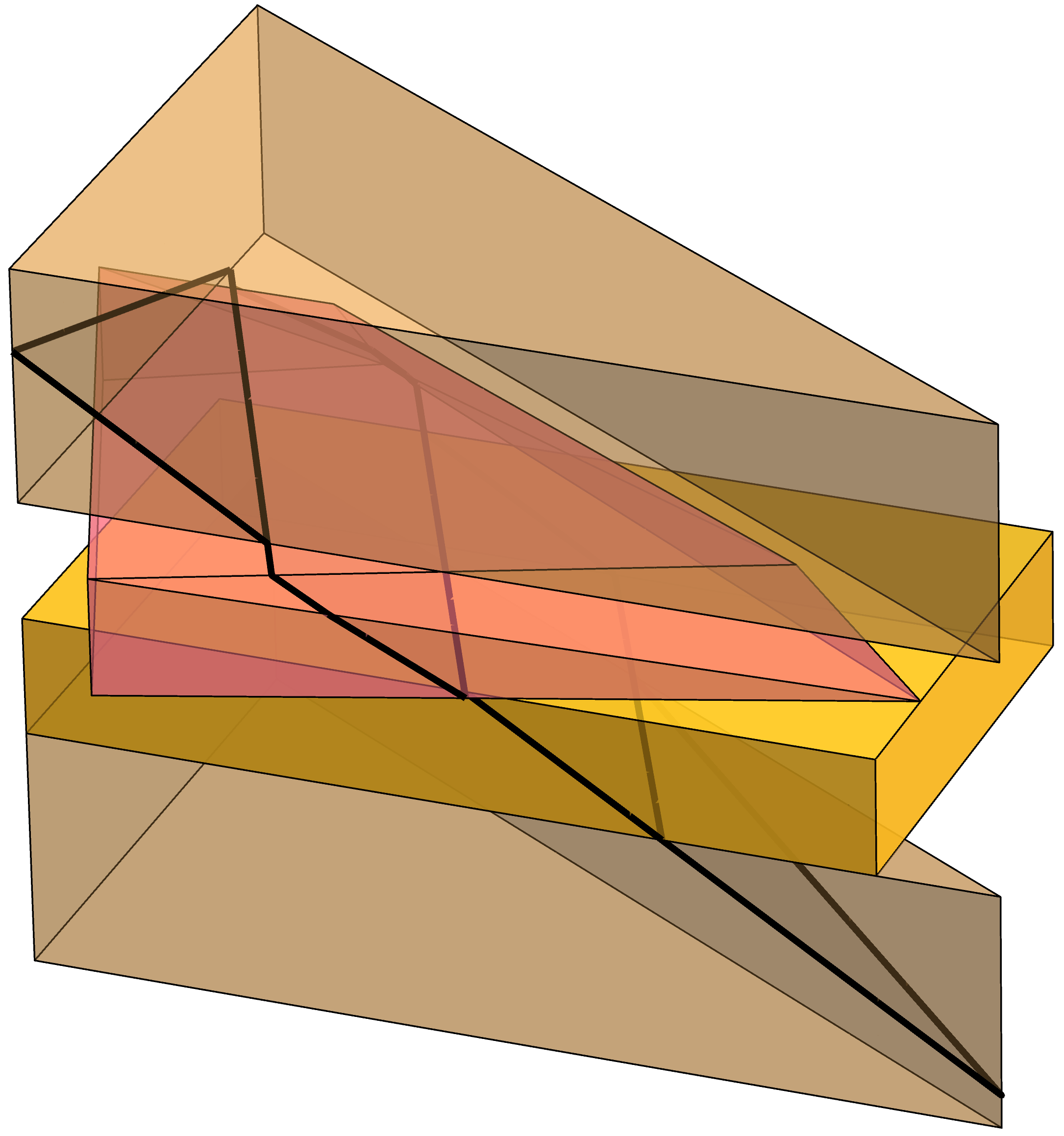} 
    \qquad\includegraphics[height=4.0cm]{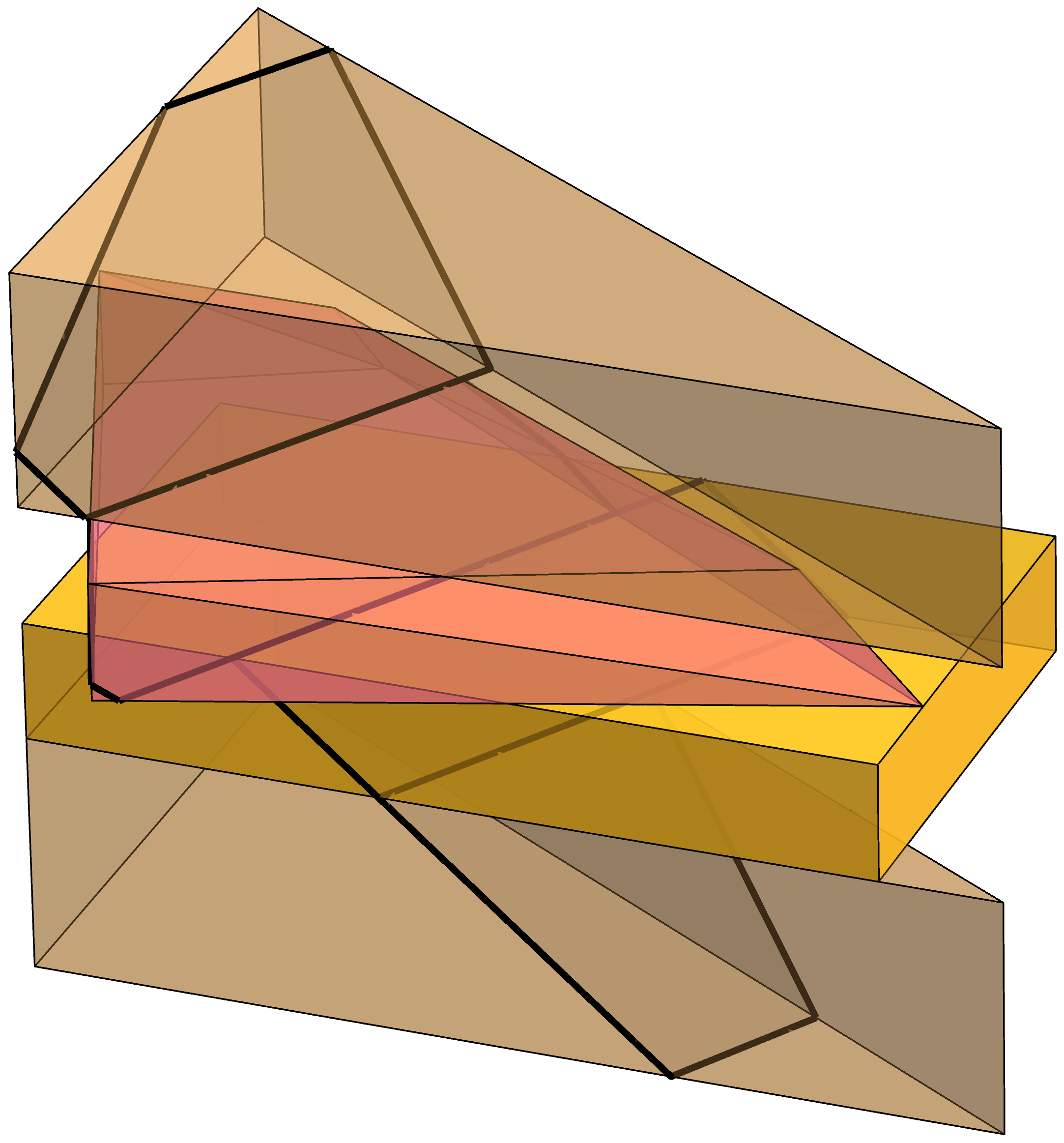}
    \qquad\includegraphics[height=4.0cm]{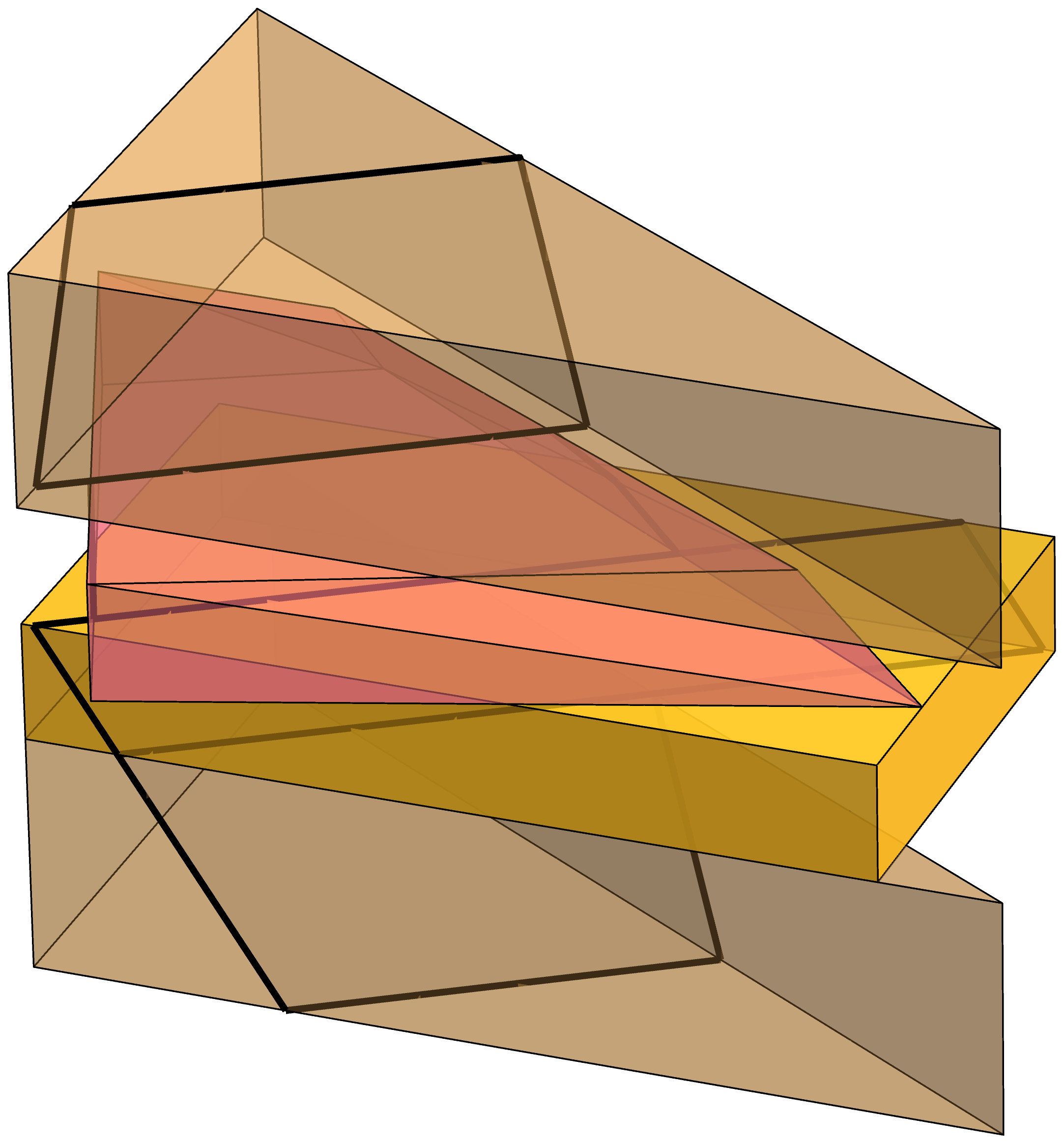}
    \caption{All possible ways to cut this ham-cheese-sandwich into two equal parts.}
    \label{fig:intro}
\end{figure}  

The existence of such bisecting cuts follows from famous and elegant proofs, but these proofs are typically non-constructive, relying on tools from algebraic topology or convex geometry. Consequently, many partition results remain mysterious from the point of view of computation. Here we present algorithms for both the ham-sandwich theorem and the centerpoint theorem, as well as for several of their generalizations. Our algorithms apply when the input measures are finite unions of convex polytopes. More precisely:
\begin{definition}
    We call a measure $P$ in $\R^d$ \emph{polytopal} if it is the restriction of the Lebesgue measure to a union of finitely many full-dimensional convex polytopes $Q_{1},\ldots, Q_{n} \subset \R^d$. We refer to $Q_j$ as subpolytopes of $P$. We say that such a measure is \emph{rational} if the vertices of all $Q_i$ have rational coordinates.
\end{definition}

Our first key observation is that for rational polytopal measures one can compute ham-sandwich cuts and centerpoints exactly, using techniques from computational semialgebraic geometry \cite{BasuPollackRoy2006,BochnakCosteRoy1998}. For polytopal measures, the sets of bisecting hyperplanes are natural semialgebraic sets, and, in fact, most problems of partitioning polytopes or polytopal measures can also be reframed as feasibility problems in semialgebraic geometry \cite{BasuPollackRoy2006}. For all our algorithmic results, the input size includes the number of vertices and the bit-size of their coordinates. Throughout the algorithmic complexity statements, the ambient dimension \(d\) is fixed unless explicitly stated otherwise. Outputs are semialgebraic descriptions or real algebraic sample points. In the rest of the paper we write polytopal measure to mean rational polytopal measure.

\subsection{Ham-Sandwich Contributions}
The \emph{ham-sandwich theorem} states that for any $d$ Lebesgue-measurable sets $S_1, \ldots, S_d \subset \R^d$ of finite measure, there exists an oriented hyperplane~$H$ that
simultaneously bisects them: each of the two open halfspaces determined by $H$ contains exactly half the
measure of each $S_i$. This is also sometimes called the equipartition problem. The classical proof of Stone and Tukey~\cite{StoneTukey1942} proceeds via the Borsuk--Ulam theorem: After parameterizing oriented directions by the sphere $\Sph^{d-1}$, one constructs an antipodal map whose zeros correspond to simultaneous bisections. See, e.g., \cite{Matousek2003} for details.

The ham-sandwich theorem is a cornerstone of combinatorial and convex geometry, but its elegant existence proof leaves open a natural computational challenge: \emph{how hard is it to find a bisecting hyperplane?}  We denote the corresponding computational problem by \HamSandwich{}: given the sets $S_1,\ldots,S_d$ as input, find a hyperplane that simultaneously bisects them.
We call such a hyperplane a \emph{ham-sandwich cut}, or simply an \emph{HS cut}. 

The classical ham-sandwich theorem requires a \emph{bisection}: exactly half of each set on each side. A natural and more flexible question is whether one can prescribe arbitrary proportions instead. Bárány, Hubard, and Jerónimo~\cite[Theorem 2]{BaranyHubardJeronimo2008} showed that if $S_1, \ldots, S_d \subset \R^d$ are convex, well-separated sets (i.e., any sub-collection can be strictly separated from the remaining sets by a hyperplane), then for any target fractions $\alpha_1, \ldots, \alpha_d \in (0,1)$ there is a \emph{unique} oriented hyperplane $H$ such that the measure of $S_i$ lying in the negative halfspace is equal to $\alpha_i \cdot \mu(S_i)$ for every $i$. Their proof uses Brouwer's Fixed-Point Theorem, while uniqueness follows from the well-separation condition. Steiger and Zhao~\cite{SteigerZhao2010} studied the discrete case, where one has $d$ finite, well-separated point sets $P_1, \ldots, P_d \subset \R^d$ in weak general position, and target integers $\alpha_i \in [|P_i|]$. They showed that there exists a unique oriented hyperplane $H$ passing through one point of each $P_i$ such that there are $\alpha_i$ points of $P_i$ in the negative halfspace, for every $i \in [d]$. We refer to these continuous and discrete results as the \emph{$\alpha$-ham-sandwich theorem}, and to the corresponding computational search problem as \AlphaHS{}.
More generally, given finite measures $\mu = \{\mu_1,\ldots,\mu_k\}$ supported on $\R^d$, we say that a halfspace $H\subset\R^d$ is an \emph{$\alpha$-cut} of $\mu$ if 
\[
\mu_i(H)=\alpha_i\,\mu_i(\mathbb R^d)
\qquad \text{for all } i=1,\ldots,k.
\]
Our first main result is the following.

\begin{theorem}[HS algorithm]\label{thm:HS_1/2_intro}
    Let $\mathcal{P} = \{P_1,\ldots,P_d\}$
    be a $d$-tuple of rational polytopal measures on $\R^d$. There exists an algorithm that takes as input the vertices of the subpolytopes of all $P_i$ and outputs the set of all HS cuts of $\mathcal{P}$. For fixed $d$, the complexity of the algorithm is polynomial in the total number of input vertices.
\end{theorem}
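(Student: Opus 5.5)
The plan is to make precise the observation from the abstract: the cap-volume function of a polytopal measure is piecewise rational on a decomposition of the space of oriented hyperplanes. The HS cuts then become the solution set of a semialgebraic system whose size is polynomial for fixed $d$.

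\emph{Parametrization.} I parametrize oriented hyperplanes by pairs $(a,b)\in \Sph^{d-1}\times\R$, or, to stay algebraic, by $(a,b)\in\R^{d}\times\R$ with the equation $\|a\|^2=1$ added. The negative halfspace is $H^-_{a,b}=\{x : \langle a,x\rangle < b\}$. For a polytopal measure $P$ with subpolytopes $Q_1,\dots,Q_n$, I define $\capvol_P(a,b)=\vol(P\cap H^-_{a,b})$. When the $Q_j$ overlap, I first replace them by a polyhedral subdivision of their union, for instance the common refinement by all facet hyperplanes. For fixed $d$ this has polynomially many cells, each with polynomially many vertices, and the union is unchanged. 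Hence I may assume the $Q_j$ have disjoint interiors, so that $\capvol_P=\sum_j \capvol_{Q_j}$.

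\emph{Piecewise rationality.} For a single polytope $Q$ with vertex set $V$, I consider the arrangement in $(a,b)$-space of the hypersurfaces $\langle a,v\rangle=b$ for $v\in V$. On each open cell of this arrangement the sign vector of $V$ is fixed. The combinatorial type of $Q\cap H^-$ is then constant: its vertices are the vertices of $Q$ on the negative side, together with the points $e\cap H$ for the edges $e=[u,w]$ crossing $H$. Each such point has the rational coordinates
\[
\frac{(b-\langle a,w\rangle)\,u+(\langle a,u\rangle-b)\,w}{\langle a,u-w\rangle}.
\]
Triangulating this fixed combinatorial type into simplices with a fixed triangulation scheme, for example a pulling triangulation, each simplex volume is a determinant of these rational points. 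So on each cell $\capvol_Q$ is a rational function of $(a,b)$, with degree bounded in terms of $d$ and $|V|$, and it is computable in polynomial time. By continuity of the volume, the values agree on cell boundaries, so closed cells can be used. The arrangement has $O(N^{d+1})$ cells, where $N$ is the total vertex count. Every cell is described by sign conditions on linear-in-$(a,b)$ polynomials, and these can be enumerated by standard arrangement or cell-decomposition algorithms in polynomial time for fixed $d$.

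\emph{Semialgebraic system.} An HS cut is a point $(a,b)$ with $\|a\|=1$ and $2\capvol_{P_i}(a,b)=\vol(P_i)$ for all $i$. The total volumes $\vol(P_i)$ are rationals computed exactly. I take the common refinement of the $d$ arrangements for $P_1,\dots,P_d$, which is again polynomially many cells. On each cell $C$ I clear denominators, whose signs are fixed on $C$. This gives a system of $d$ polynomial equations, the sphere equation, and the sign conditions defining $C$, all in $d+1$ variables. The set of all HS cuts is the union over cells of these basic semialgebraic sets. Applying the algorithms of \cite{BasuPollackRoy2006}, which run in polynomial time for a fixed number of variables and bounded degree, decides nonemptiness and outputs a semialgebraic description together with sample points in each connected component.

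\emph{Main obstacle.} The hard part is the complexity bookkeeping. The degrees of the rational functions grow with the number of crossing edges, since the denominators are products of factors $\langle a,u-w\rangle$. The Basu--Pollack--Roy bounds are polynomial in the number of polynomials but exponential in the number of variables, with degree entering only as a power governed by $d$. So I must make sure the degree does not blow up with $N$. My first attempt is to avoid combining all simplex volumes over one common denominator. Instead I introduce an auxiliary variable $t_S$ for each simplex volume, with one low-degree defining equation per simplex. This, however, increases the number of variables. If that fails, I fall back on a cone-decomposition formula for $\capvol$, in the style of Lawrence's formula, which writes it as a sum over vertices of terms of degree $O(d)$ in $b$ over products of at most $d$ linear forms in $a$. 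Clearing these denominators keeps the total degree polynomial in $N$ for fixed $d$, which the fixed-dimension bounds tolerate.
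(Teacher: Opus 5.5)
This is essentially the paper's own proof: the slicing arrangement $\{(v,-1)^\perp\}$ in $(a,b)$-space, a fixed pulling triangulation of the cap on each chamber using the edge--hyperplane intersection formula, clearing denominators, and then a semialgebraic algorithm of cost $(sD)^{O(d)}$ in the $d+1$ variables. The step you flag as the main obstacle is not one: over the common denominator $\prod\langle a,u-w\rangle$, taken over crossing edges only, each equation already has degree at most the total number of edges, i.e.\ $O(N^2)$, which is exactly the paper's bound, and this is harmless because only the fixed number of variables sits in the exponent, so you can drop both workarounds (the auxiliary variables $t_S$ would in fact break polynomiality, since their number grows with $N$).
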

More precisely, the algorithm outputs a quantifier-free semialgebraic description of all HS cuts of $\mathcal{P}$. From this description, the algorithm can compute its dimension, decide whether it is empty, and produce one sample point in each connected component. In zero-dimensional instances, it outputs all HS cuts, represented by real algebraic numbers.
The same approach applies more generally to arbitrary target proportions and to any number of measures.
In particular, when the number of measures exceeds the ambient dimension, an $\alpha$-cut need not exist; nevertheless, our algorithm decides existence and, when cuts do exist, describes the entire space of them.

\begin{theorem}[\AlphaHS{} algorithm]\label{thm:generalized_HS_intro}
    Let $\mathcal{P} = \{P_1,\ldots,P_k\}$ be a $k$-tuple of rational polytopal measures on $\R^d$. Let $\alpha_1,\dots,\alpha_k \in (0,1)\cap \Q$. There exists an algorithm that takes as input the vertices of the subpolytopes of all $P_i$ and outputs the set of all $\alpha$-cuts of $\mathcal{P}$.
    For fixed $d$, the complexity of the algorithm is polynomial in the total number of input vertices.
\end{theorem}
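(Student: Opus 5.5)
The plan is to express the cap-volume function of each $P_i$ as a piecewise rational function on a polynomial-size decomposition of the space of oriented hyperplanes. The $\alpha$-cut condition then becomes a semialgebraic system of polynomial size in a fixed number of variables, and standard algorithms from \cite{BasuPollackRoy2006} handle it.

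\textbf{Parametrization and cells.} I parametrize an oriented hyperplane by $(a,b)\in \Sph^{d-1}\times\R$, with negative halfspace $H^-_{a,b}=\{x : \langle a,x\rangle\le b\}$. To keep everything algebraic, I use the affine charts $a_j=\pm1$ instead of the sphere; there are $2d$ charts, each an open set in $\R^{d}$. Let $V$ be the set of all vertices of all subpolytopes of all $P_i$, with $|V|=N$. The hyperplanes through one vertex, $\{(a,b): \langle a,v\rangle=b\}$ for $v\in V$, form an arrangement of $N$ hyperplanes in the $d$-dimensional parameter space of each chart. For fixed $d$ this arrangement has $O(N^{d})$ cells, and the cells can be listed together with their linear descriptions in polynomial time. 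On each open cell the combinatorial type of $H^-_{a,b}\cap Q_j$ is constant for every subpolytope $Q_j$, since the set of vertices on each side is fixed.

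\textbf{Piecewise rational cap volume.} Next I show that on each cell, $\capvol_{Q_j}(a,b)=\vol(H^-_{a,b}\cap Q_j)$ is a rational function of $(a,b)$ with rational coefficients and degree bounded in terms of $d$. I would triangulate $Q_j$ into simplices with vertices in $V$; for fixed $d$ this has polynomially many simplices. On a fixed cell, the new vertices of the truncated simplex are the intersection points of edges $[u,w]$ with $H$, namely $u+\frac{b-\langle a,u\rangle}{\langle a,w-u\rangle}(w-u)$. These are rational in $(a,b)$ with denominators that do not vanish on the cell. A combinatorially fixed triangulation of the truncated simplex then gives its volume as a sum of determinants, so the volume is rational of bounded degree. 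Lawrence's signed decomposition formula would serve equally well.

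Summing over the subpolytopes gives $\capvol_{P_i}$ on each cell. If the $Q_j$ overlap, I first refine the union into a polyhedral complex using the arrangement of all facet hyperplanes. For fixed $d$ this has polynomial size, and the vertex set grows only polynomially. The total volumes $\vol(P_i)$ are rational numbers computable in polynomial time.

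\textbf{Semialgebraic system and output.} After clearing positive denominators on each closed cell $C$ (its boundary also needs handling; see below), the $\alpha$-cuts in $C$ are the solutions of
\[
(a,b)\in \overline{C},\qquad D_i(a,b)\,\alpha_i\vol(P_i)-N_i(a,b)=0\quad (i=1,\dots,k),
\]
where $N_i/D_i$ is the rational expression for $\capvol_{P_i}$ on $C$. These are polynomially many systems, each in $d$ variables with $k$ equations plus $O(N)$ linear constraints and bounded degree. The set of all $\alpha$-cuts is the union of the solution sets over cells and charts, after identifying overlapping charts. Applying the quantifier-free description, dimension, emptiness and sampling algorithms of \cite{BasuPollackRoy2006} with the number of variables fixed gives a running time polynomial in $N$, in $k$ and in the bit sizes of the inputs and of $\alpha$.

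\textbf{Main obstacle.} I expect the hardest step to be justifying the rational formula on closed cells. Cap volume is continuous in $(a,b)$, so the rational formula valid on an open cell extends to its closure. Division by $\langle a,w-u\rangle$ is only a problem when $H$ is parallel to an edge, and such $H$ are either excluded from the cell or handled by continuity. The other delicate point is keeping the degree of the rational functions, and the number of cells, bounded polynomially for fixed $d$. I would first try to settle both via the simplex triangulation together with the continuity argument.
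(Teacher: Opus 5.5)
Your plan follows the paper's proof: decompose the hyperplane parameter space by the arrangement of hyperplanes through single vertices, show $\capvol_{P_i}$ is rational on each cell via a fixed triangulation and the edge--hyperplane intersection formula, clear denominators, and run fixed-dimension semialgebraic algorithms with cost $(sD)^{O(d)}$. Your two deviations are cosmetic: you use the charts $a_j=\pm1$ instead of the sphere equation, and you triangulate each $Q_j$ before truncating rather than triangulating the cap. Two steps are wrong as written, though both are easy to repair.

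\textbf{Spurious solutions on closed cells.} The system on $\overline{C}$ has extra solutions. At points of $\overline{C}$ where $D_i=0$ (the hyperplane contains an edge), boundedness of $\capvol$ forces $N_i=0$ as well. So the cleared equation holds there no matter what $\alpha_i$ is.

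For example, take $T=\conv\{(0,0),(1,0),(0,1)\}$, the chart $a_1=1$, and the cell $0<b<\min(1,a_2)$. Here $\capvol_T=b^2/(2a_2)$ and the cleared equation is $\alpha a_2=b^2$. It is satisfied at $(a_2,b)=(0,0)$, i.e.\ by the halfspace $x\le 0$, which cuts off zero area. Continuity does not rescue this, because the polynomial equation cannot see the limiting value of $N_i/D_i$.

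The fix is to work with all relatively open faces of the arrangement, of every dimension. On each such face the sign vector of the vertices is constant, and new cap vertices come only from edges whose endpoints lie strictly on opposite sides. So the denominators $\langle a,w-u\rangle$ are strictly positive there, and each face gets its own formula. There are still $O(N^d)$ faces. The paper's own remark that lower-dimensional cells are handled by ``specialization'' hides the same $0/0$ issue, so it is worth stating this explicitly.

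\textbf{Degree bound.} The degree is bounded in terms of $d$ only for a single truncated simplex. After summing over the triangulation and clearing denominators, $D_i$ is a product over all crossing edges. Its degree is therefore polynomial in $N$, namely $O(N^2)$, matching the paper's bound by the number of edges. This still gives polynomial running time, but your statement that the systems have ``bounded degree'' should be corrected to $D=O(N^2)$.
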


\subsection{Centerpoint Contributions}
The second classical result we study is the centerpoint theorem \cite{Rado46:TheoremGeneralMeasure}. Let $\mu$ be a finite measure on $\R^d$; in the discrete setting, one may take $\mu$ to be the uniform measure over a finite point set. For a point $x\in\R^d$, the \emph{Tukey halfspace depth} of $x$ with respect to $\mu$ is
\[
\mathrm{depth}_\mu(x) = \inf\{\mu(H_{\geq 0}) \mid  H_{\geq 0}\subset\R^d \text{ is a closed halfspace with } x\in H_{\geq 0}\}.
\]
A \emph{centerpoint}, also called a \emph{Tukey point}, is any point $x^\star\in\R^d$ satisfying
\[
\operatorname{depth}_\mu(x^\star)\ge \frac{\mu(\mathbb R^d)}{d+1}.
\]
The existence of such a point is guaranteed by the \emph{centerpoint theorem}: For every finite Borel measure $\mu$ on $\R^d$, and in particular for every finite point set, there exists a point $x\in\R^d$ such that every closed halfspace containing $x$ has $\mu$-mass at least $\mu(\R^d)/(d+1)$. 

Centerpoints are fundamental objects in discrete and computational geometry, optimization, and robust statistics. They provide a high-dimensional analogue of the median, are invariant under affine transformations, and yield provably robust representatives of data, with breakdown point at least $1/(d+1)$.
They also serve as key tools in geometric algorithms and existence proofs, including Helly-type theorems, weak $\varepsilon$-nets, and various partition results. We refer to \cite{DeLoera2019,Matousek02:LecturesDiscreteGeom} for further details and applications.

Already when $\mu$ is the uniform measure on a finite set $X\subset\R^d$ of $n$ points, computing a centerpoint is algorithmically nontrivial.
In fixed dimension $d$, the depth function is piecewise constant on a hyperplane arrangement determined by $X$, and exact computation can be performed in polynomial time in $n$, although typically an exponential in $d$. Our algorithm fits into this general framework and extends it from point sets to polytopal measures.

\begin{theorem}[Centerpoint algorithm]\label{thm:centerpoints_intro}
   Let $P \subset \R^d$ be a rational polytopal measure. There exists an algorithm that takes as input the vertices of all subpolytopes defining $P$ and outputs at least one centerpoint for $P$. For fixed dimension, the complexity of the algorithm is polynomial in the total number of input vertices. Moreover, the set of all centerpoints of $P$ is semialgebraic, and a defining semialgebraic description can be recovered in polynomial time.
\end{theorem}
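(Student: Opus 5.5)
The plan is to reduce everything to the cap-volume function and semialgebraic decision procedures. Parametrize closed halfspaces by $(u,t)\in \Sph^{d-1}\times\R$, with $H_{u,t}=\{x : \langle u,x\rangle \ge t\}$, and write $\capvol_P(u,t)=\vol(P\cap H_{u,t})$.

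\textbf{Step 1: piecewise rationality of the cap volume.} For each subpolytope $Q_j$, the combinatorial type of $Q_j\cap H_{u,t}$ (which vertices lie on which side) is constant on the cells of the arrangement in $(u,t)$-space cut out by the hypersurfaces $\langle u,v\rangle=t$, one for each vertex $v$. On each cell, $\vol(Q_j\cap H_{u,t})$ is given by a fixed triangulation. Its new vertices are the intersections of edges $[v,w]$ with the hyperplane, namely $v+\frac{t-\langle u,v\rangle}{\langle u,w-v\rangle}(w-v)$. Hence the volume is a rational function of $(u,t)$ whose degree depends only on $d$. In fixed dimension there are polynomially many cells in the total number of vertices $N$, so we get a semialgebraic description of $\capvol_P$ of polynomial size. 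Summing over $j$ is harmless: the measure is the Lebesgue measure of the union, so I first refine the union into interior-disjoint polytopes, for instance via the arrangement of all facet hyperplanes, which is polynomial in fixed $d$.

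\textbf{Step 2: depth as a quantified formula.} By definition, $x$ is a centerpoint if and only if
\[
\forall (u,t)\in \Sph^{d-1}\times\R:\quad \langle u,x\rangle\ge t \;\Rightarrow\; \capvol_P(u,t)\ge \tfrac{\vol(P)}{d+1}.
\]
It suffices to consider $t=\langle u,x\rangle$, since shrinking a halfspace decreases the mass. The condition therefore becomes $\forall u\in\Sph^{d-1}:\ \capvol_P(u,\langle u,x\rangle)\ge \vol(P)/(d+1)$, where $\vol(P)$ is a rational number computed exactly. This is a first-order formula with one block of $d$ universally quantified variables and $d$ free variables, built from polynomially many polynomials of degree bounded in terms of $d$. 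By the Tarski--Seidenberg theorem the set of centerpoints is semialgebraic. Quantifier elimination in the form of \cite{BasuPollackRoy2006}, with a fixed number of variables and blocks, runs in time polynomial in the number of polynomials and in the bit sizes. It therefore produces a quantifier-free description in polynomial time.

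\textbf{Step 3: extracting a point.} The centerpoint theorem guarantees that this set is nonempty. Applying the sampling algorithm of \cite{BasuPollackRoy2006} (one point per connected component) to the quantifier-free description yields a centerpoint given by real algebraic numbers, again in polynomial time for fixed $d$.

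\textbf{Main obstacle.} The hardest part is Step 1 done carefully: controlling the number of cells and the degrees so that everything stays polynomial in $N$. This means handling degenerate cells where the hyperplane contains vertices or is parallel to edges, where the denominators $\langle u,w-v\rangle$ vanish; continuity of the volume lets one use closures of cells. It also means justifying the disjoint refinement of overlapping subpolytopes. I would first try to prove the bound on cells by viewing the cells as the sign conditions of the $N$ polynomials $\langle u,v\rangle-t$ in $d+1$ variables, which number $O(N^{d+1})$.
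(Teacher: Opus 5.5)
Your proposal is correct and follows essentially the paper's route. The paper obtains this theorem as the case $\ell=1$, $\alpha=\frac{1}{d+1}$ of its deep-flats result, whose proof is exactly your argument: reduce via monotonicity in $t$ to $\forall u\in\Sph^{d-1}:\ \capvol_P(u,\langle u,x\rangle)\ge \frac{\vol_d(P)}{d+1}$, apply quantifier elimination to the piecewise rational cap-volume function, then sample a point.

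One correction: your claim that the degrees are bounded in terms of $d$ alone is false. After clearing the common denominator $\prod\langle u,w-v\rangle$ over the cut edges, the degree can grow with the number of cut edges; the paper's lemma bounds it by the total number of edges of $P$. Your complexity conclusion is unaffected, since quantifier elimination and sampling in a fixed number of variables are polynomial in the degree as well.
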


The geometric starting point of our algorithm is that the set of centerpoints of a convex polytopes can be viewed as a \emph{convex floating body} \cite{BarLar88:Floating,SchWer90:Floating,MorWer19:FloatingPolytopes}.
Informally, these bodies capture the deep core of a convex body after removing intersections with halfspaces cutting away a prescribed proportion of volume. We show that, for polytopes, these objects admit natural semialgebraic descriptions and can therefore be computed using the same tools of real algebraic geometry that underlie our ham-sandwich algorithms.

More generally, this perspective leads to a common framework for studying not only centerpoints but also higher-dimensional analogues of depth. We prove in \Cref{thm:deep-flats-semi-alg} that, in fact, the intersections with flats are also semialgebraic sets over the affine Grassmannian.
This gives a common algorithmic framework for results interpolating between the ham-sandwich and centerpoint theorems, such as the center transversal theorem \cite{Dolnikov92:GeneralizationHSCenterpoint,ZivVre90:ExtensionHS}.

\paragraph{Organization}
Our paper is organized as follows. We start by reviewing prior computational work in \Cref{sec:prior}, and in \Cref{sec:semialg} we recall real algebraic geometry tools and algorithms used throughout. \Cref{sec:HS} contains the main construction used in the algorithm for hyperplane cuts of polytopal measures. The proofs of \Cref{thm:generalized_HS_intro} and \Cref{thm:HS_1/2_intro} are given in this section. In \Cref{sec:transversal}, our real algebraic geometry setup allows us to algorithmically interpolate from the ham-sandwich theorem to the centerpoint theorem via the center transversal theorem, see \Cref{thm:deep-flats-semi-alg} and \Cref{prop:deep-flats-algorithm}. Finally, \cref{sec:floating-bodies} develops the connection between centerpoints of convex polytopes and convex floating bodies.

\paragraph{Acknowledgments} We thank Melissa Daniele for contributions to the code behind this project. MB is supported by the SPP 2458 ``Combinatorial Synergies'', funded by the Deutsche Forschungsgemeinschaft (DFG, German Research Foundation). JDL is grateful for the financial support received through 
NSF grants 2348578 and 2434665. CM is supported by Dr. Max R\"ossler, the Walter Haefner Foundation, and the ETH Z\"urich Foundation.

\section{Prior Algorithmic Work} \label{sec:prior}

What is known about the computational complexity of these problems? We summarize this below. However, most existing work on concrete algorithms has focused on discrete measures (clouds of points), and only a few works discuss polygons.

\paragraph{Complexity of ham-sandwich problems}
The complexity of HS problems varies significantly with the dimension and the type of constraints. In fixed dimension, HS cuts for polygons \cite{EdelsbrunnerWaupotitsch1986,Stojmenovic91} and finite point sets \cite{LoMatousekSteiger1994} are solvable in polynomial time. Steiger and Zhao \cite{SteigerZhao2010} provided an $O(n(\log n)^{d-3})$ algorithm, later refined by Bereg \cite{Bereg2012} to $n \cdot 2^{O(d)}$. This remains linear in the input size $n$ but exponential in $d$.
 
On the other hand, equipartitioning becomes NP-hard when the number of measures $k$ exceeds the dimension $d$, or when multiple simultaneous hyperplanes are required; see \cite{RoldanPensadoSoberon}. 
Because the bisecting hyperplane is guaranteed to exist for \emph{every} valid input, \HamSandwich{}, as a computational problem, belongs to the class $\TFNP$ (Total Function NP) of search problems that always admit a solution~\cite{Papadimitriou1994}.
Since the standard proof routes through the Borsuk--Ulam Theorem, 
whose
discrete version, \emph{Tucker's Lemma}, is $\PPA$-complete~\cite{Aisenberg2020}, one obtains containment of \HamSandwich{} in $\PPA$ for free. The question of completeness remained open for some time, until Filos-Ratsikas and Goldberg \cite{FilosRatsikas2019, filosratsikasGoldberg2023} proved that \HamSandwich{} and its discrete analogs (Necklace Splitting and Consensus-Halving) are indeed $\PPA$-complete \cite{FilosRatsikas2019,filosratsikasGoldberg2023, Aisenberg2020}. While efficient algorithms exist for low dimensions, the gap for general dimension $d$ remains a central challenge, recently addressed by Chiu, Choudhary, and Mulzer \cite{CCM2020}. As we discuss in \Cref{sec:transversal}, one can interpolate between the ham-sandwich theorem and the centerpoint theorem, and Jadhav and Mukhopadhyay \cite{JadhavMukhopadhyay1994} used a ham-sandwich subroutine to compute planar centerpoints.

\paragraph{Complexity of computing centerpoints}
In general, when $d$ is part of the input, computing the Tukey depth and computing a deepest point are known to be computationally hard problems. Testing whether a given point is a centerpoint is coNP-complete in general dimension. Because of this, much work has focused either on fixed-dimensional algorithms or on approximation.
For fixed $d$, exact algorithms for finite point sets run in time polynomial in the number $n$ of points, though with a dependence on the dimension that becomes prohibitive as $d$ grows.
In higher-dimensional settings, randomized and deterministic approximation algorithms compute points with provable halfspace-depth guarantees, typically below the centerpoint threshold $n/(d+1)$ but still sufficient for many applications \cite{ClarksonEppsteinMillerTeng1993,MillerSheehy2010Approximate}.
More recent work gives near-optimal algorithms for several geometric center problems, including Tukey medians and centerpoints, in fixed dimension \cite{ChanHarPeledJones2022OptimalCenters}. Such approximate or fixed-dimensional methods are often useful in applications such as robust estimation and geometric data analysis, where computing exact deepest points may be computationally expensive \cite{statisticians2003DepthContours,Fojtik02042024}.

\section{Key Tools from Real Algebraic Geometry}\label{sec:semialg}
In this paper, we rely on methods from real algebraic geometry. Such tools have already been of great importance in computational geometry and economics (see e.g., \cite{BasuPollackRoy2006,DeLoera2019,Schaeferetal2024} and the reference to applications therein). Here we rely on these techniques for our problems.
As we see later, the sets of all hyperplanes partitioning measures in certain proportions are in fact semialgebraic sets. A \emph{semialgebraic set} in $\mathbb{R}^n$ is a subset defined by finitely many polynomial equalities and inequalities, combined using Boolean operations
(finite unions, finite intersections, and complements). Equivalently, it is any
set obtained from basic sets of the form
\[
\{x\in \mathbb{R}^n : f(x)=0,\ g_1(x)>0,\dots,g_r(x)>0\},
\]
where $f,g_1,\dots,g_r \in \mathbb{R}[x_1,\dots,x_n]$, by finitely many Boolean
operations. Thus, algebraic sets, polyhedra, balls, and many regions bounded by
polynomial curves or surfaces are semialgebraic; see \cite[Chapter~2]{BasuPollackRoy2006}, \cite[Chapter~2]{BochnakCosteRoy1998}, \cite[Chapter~1]{BenedettiRisler1990}.

Semialgebraic sets form a remarkably robust class. They are closed under finite
union, finite intersection, complement, Cartesian product, closure, interior,
and boundary; see \cite[Section~2.3]{BochnakCosteRoy1998} and
\cite[Chapter~2]{BasuPollackRoy2006}. Every semialgebraic set has only finitely
many connected components, each of which is again semialgebraic. More
generally, semialgebraic sets admit finite stratifications into smooth
semialgebraic manifolds and triangulations by finite simplicial complexes
\cite[Chapter~9]{BochnakCosteRoy1998}, \cite[Chapter~2]{BenedettiRisler1990}. Consequently, their topology is tame:
they have finite Betti numbers and a well-defined Euler characteristic.

In what follows, we will make use of \emph{Quantifier Elimination}. Consider a set of the form
\[
S = \left\{ x \in \mathbb{R}^n \colon 
(Q_1 y_1 \in \mathbb{R}) \cdots (Q_k y_k \in \mathbb{R}) \;
\bigvee_{i \in I} \; \bigwedge_{j \in J}
g_{i,j}(x,y)\;\sigma_{i,j}\;0 
\right\}
\]
where each $Q_\ell$ is either $\exists$ or $\forall$ and is called a \emph{quantifier}, $I,J$ are finite sets, all $g_{i,j}$ are polynomials, and $\sigma_{i,j} \in \{=,>,<,\ge,\le,\neq\}$. Then, the quantifier elimination theorem states that $S$ is a semialgebraic set, or in other words, there exists another representation of $S$ which does not require any quantifier $Q_\ell$.
There is no known easy proof for this result, and we refer the reader to \cite[Proposition~2.2.4]{BochnakCosteRoy1998}, \cite[Theorem~2.77]{BasuPollackRoy2006}. 
The theorem is fundamental in real algebraic geometry because it shows that semialgebraic sets remain within the same category under one of the most important operations in mathematics: elimination of variables. It is also the starting point for many algorithmic applications in the area. In particular, we will use it several times in this paper.

\paragraph{Computational Aspects}
In this paper we will be computing with specific semialgebraic sets that arise as sets of all hyperplanes that divide a convex polytope into proportional pieces. We briefly touch on different techniques that may provide concrete algorithms.

There are several approaches from real algebraic geometry that can be applied to actually compute a point inside a semialgebraic set $S$, to show it is empty, or to describe it explicitly. 
A classical method to describe semialgebraic sets $S$ as a union of topological balls is \emph{cylindrical algebraic decomposition} or CAD \cite[Chapter 11]{BasuPollackRoy2006}.
CAD recursively projects and decomposes $\R^n$ into cells where the sign of each polynomial defining $S$ is constant. Inherently, it provides at least one sample point in every connected component of $S$. 
CAD runs in doubly-exponential time in general, and in polynomial time for fixed dimension $d$. However, much of the complexity of CAD lies in the quantifiers describing $S$. In our case, there are no quantifiers, so one can aim for a better bound.

If the interest focuses mainly on samples (such as finding a single ham-sandwich cut), then the so-called \emph{critical point methods}, developed by Grigoriev and Vorobjov \cite{GriVor88:SubexponentialSystems,GriVor92:SubexponentialComponents}, and later improved by many other researchers, provide single-exponential algorithms. One such algorithm can compute a finite set of points that meets every connected component of the solution set.
The critical point method effectively avoids exploring all cells of a full CAD and instead finds the relevant boundary and intersection points by solving lower-dimensional polynomial systems (e.g., by computing resultants or Gr\"obner bases for projection) and deforming possibly singular varieties. 

A third family of relevant algorithms, again singly-exponential, is the family of \emph{roadmap} methods, first introduced by Canny \cite{Canny93:Roadmaps}, whose idea is to build a 1-dimensional ``spine'' inside the set $S$. This is a more sophisticated development of critical point methods, which gives additional information about the connectivity of $S$, recording which sampled points belong to the same connected component. For details, see \cite[Algorithm 16.13]{BasuPollackRoy2006}.

Finally, an algorithm with a different flavor is provided by the \emph{homotopy continuation} methods. These methods solve a polynomial system by continuously deforming a system with known solutions into the target system, and numerically tracking the corresponding solution paths using Newton-type iterations, with complexity governed by the conditioning of the path. A fundamental result shows that for random inputs the expected complexity is polynomial in the input size and the degrees; in particular, see \cite[Theorem 17.1]{BurCuc13:Condition}.

\section{Ham-sandwich algorithm for polytopal measures}\label{sec:HS}
For any $a\in \R^d$, $b\in \R$, denote an associated halfspace by 
\[
H(a\leq b) = \{x\in \R^d \mid \langle a, x\rangle \leq b\},
\]
where $\langle \cdot,\cdot \rangle$ defines the standard scalar product on $\R^d$. $H(a\geq b)$ and $H(a = b)$ are defined analogously. We denote a \emph{cap}, namely the intersection of a set $S$ with the halfspace $H(a\leq b)$, by $S(a\leq b)$. We use analogous notation for intersections with $H(a\ge b)$ and $H(a=b)$.
Given a $d$-dimensional compact set $S\subset \R^d$, for each direction $u\in \Sph^{d-1}$ and value $t\in \R$, consider the associated \emph{cap-volume function}
\[
\capvol_S(u,t) = \vol_d   (S(u\leq t)) = \vol_d(S \cap H(u \leq t)),
\]
where $\vol_d$ is the standard Euclidean volume in $\R^d$.
For a fixed $u$, the map $t\mapsto \capvol_S(u,t)$ is continuous, nondecreasing, and since $S$ is compact there exist $t_0(u), t_1(u) <\infty$ such that
\[
t_0(u) = \inf \{t \in \R \mid \capvol_S(u,t)>0 \}, \quad t_1(u) = \sup \{t \in \R \mid \capvol_S(u,t)<\vol_d (S) \}.
\]
Hence, for every $\alpha\in [0,1]$ and every unit vector $u\in \Sph^{d-1}$ there exist $t_\alpha(u)$ such that $\capvol_S(u,t_\alpha(u))=\alpha\,\vol_d(S)$ for $\alpha\in (0,1)$, with $t_0$ and $t_1$ defined as above. 
We want to record such values in the following set.

\begin{definition}[$\alpha$-quantile locus]\label{def:quantile-locus}
Let $S\subset\R^d$ be a full-dimensional compact set and let $\alpha\in (0,1)$.
The \emph{$\alpha$-quantile locus} of $S$ is the set
\[
V_\alpha(S)
=
\{(u,t)\in \Sph^{d-1}\times\mathbb R :
\capvol_S(u,t)=\alpha\,\vol_d(S)\}.
\]
\end{definition}
In particular, for $\alpha=\tfrac12$ the set $V_{1/2}(S)$ describes those values of $(u,t)$ for which the corresponding halfspace ``halves'' the set $S$.
If $S$ is a set with connected interior, then for every unit vector $u\in \Sph^{d-1}$ and every $\alpha\in(0,1)$ there is a unique value $t_\alpha(u)$ such that $\capvol_S(u,t_\alpha(u))=\alpha\,\vol_d(S)$. Moreover, the map $u\mapsto t_\alpha(u)$ is continuous \cite[Lemma 1]{BaranyHubardJeronimo2008}.
In this case $V_\alpha(S)$ is the graph of this continuous function, and hence has dimension $d-1$.

\begin{remark}\label{rmk:intersection-bodies}
    In the spirit of constructions associated with convex bodies, such as intersection bodies \cite{LUTWAK1988,BBMS22:IntBodies}, one can also visualize $V_\alpha(S)$ through the signed radial map
    \[
        \Phi_{\alpha,S}:\Sph^{d-1}\to \R^d,
        \qquad
        \Phi_{\alpha,S}(u)=t_\alpha(u)u .
    \]
    The image of this map is a geometric object in $\R^d$ associated with the $\alpha$-quantile locus; see \Cref{fig:triangle_curves-gauss} for an example. However,
    this representation should be interpreted with care. 
    Away from the origin, a point $y=\Phi_{\alpha,S}(u)\neq 0$ determines the corresponding hyperplane, since its normal direction is the line spanned by $y$, and $|t_\alpha(u)|=\|y\|$. However, the point $y$ alone does not determine the oriented pair $(u,t_\alpha(u))$ unless the parametrization is retained. At the origin, several directions may be identified, so recovering the corresponding normals requires keeping track of the local branches of $\Phi_{\alpha,S}$, rather than only its image in $\R^d$.
\end{remark}

\begin{example}[$\tfrac{1}{2}$-quantile locus of a triangle]\label{ex:triangle}
    We illustrate the $\alpha$-quantile locus of the triangle
    \[
        P = \conv\{ (-1,-1), (-1,2), (2,-1) \}
    \]
    for $\alpha=\tfrac{1}{2}$.
    A generic hyperplane intersecting $P$ meets exactly two of its edges, giving rise to three combinatorially distinct types of partitions. These are illustrated in \Cref{fig:triangle_curves-planar} by the three representative lines.
    \begin{figure}[ht]
        \centering
        \begin{subfigure}[t]{0.32\textwidth}
            \begin{tikzpicture}[scale=1.5]

\definecolor{tri_color}{named}{MidnightBlue}     

\newcommand{\drawlineABC}[4]{%
  \def\a{#1}
  \def\b{#2}
  \def\c{#3}

  \pgfmathparse{abs(\b) > 0 ? 1 : 0}
  \ifnum\pgfmathresult=1
    \pgfmathsetmacro{\xA}{0}
    \pgfmathsetmacro{\yA}{-\c/\b}
  \else
    \pgfmathsetmacro{\xA}{-\c/\a}
    \pgfmathsetmacro{\yA}{0}
  \fi

  \pgfmathsetmacro{\dx}{\b}
  \pgfmathsetmacro{\dy}{-\a}

  \draw[ultra thick, #4]
    ({\xA - 10*\dx}, {\yA - 10*\dy}) --
    ({\xA + 10*\dx}, {\yA + 10*\dy});
}

\begin{scope}[scale=0.8]
    \clip (-1.5,-1.5) rectangle (2.1,2.1);

    \filldraw[fill=tri_color!50!black, fill opacity=0.2, draw=tri_color!50!black, very thick]
    (-1,-1) -- (-1,2) -- (2,-1) -- cycle;
    \foreach \p in {(-1,-1), (-1,2), (2,-1)}{
      \fill[tri_color!50!black] \p circle (2pt);
    }

    \drawlineABC{1}{1}{1}{RedOrange}
    \drawlineABC{0}{1}{-1}{Dandelion}
    \drawlineABC{1}{0}{-1}{Orchid}

  \node[anchor=north] at (-1,-1) {\small$\ (-1,-1)$};
  \node[anchor=north,  xshift=-1.2em] at (2,-1) {\small$\ (2,-2)$};
  \node[anchor=west, yshift=-0.2em] at (-1,2) {\small$\ (-1,2)\ $};
\end{scope}
\end{tikzpicture}
            \caption{The triangle $P$ with three hyperplane cuts.}
            \label{fig:triangle_curves-planar}
        \end{subfigure}
        \begin{subfigure}[t]{0.32\textwidth}
            \includegraphics[height=4.5cm]{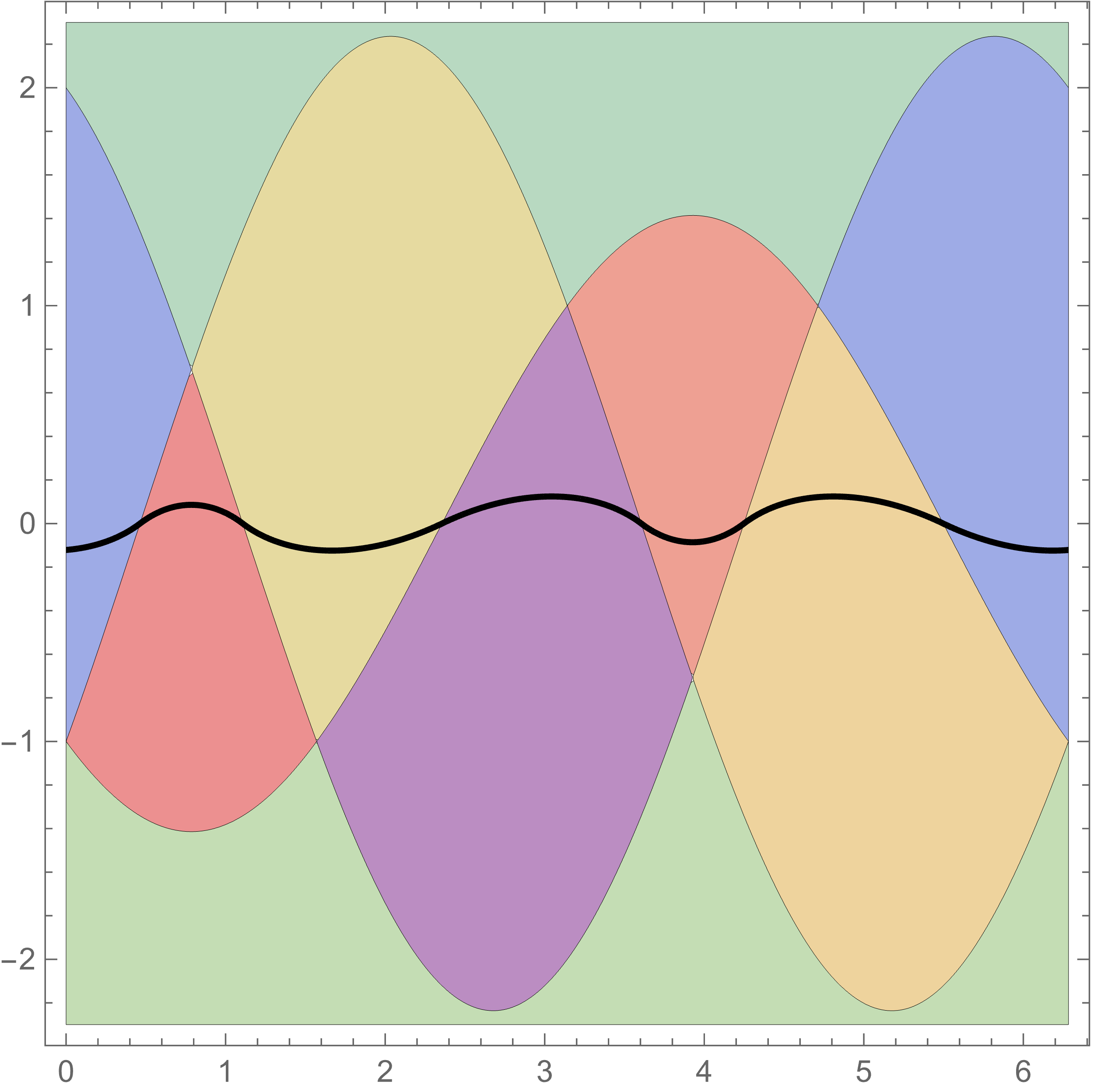}
            \caption{The $\tfrac12$-quantile locus $V_{\tfrac12}(P)$.}
            \label{fig:triangle_curves-chambers}
        \end{subfigure}
        \begin{subfigure}[t]{0.32\textwidth}
            \includegraphics[height=4.5cm]{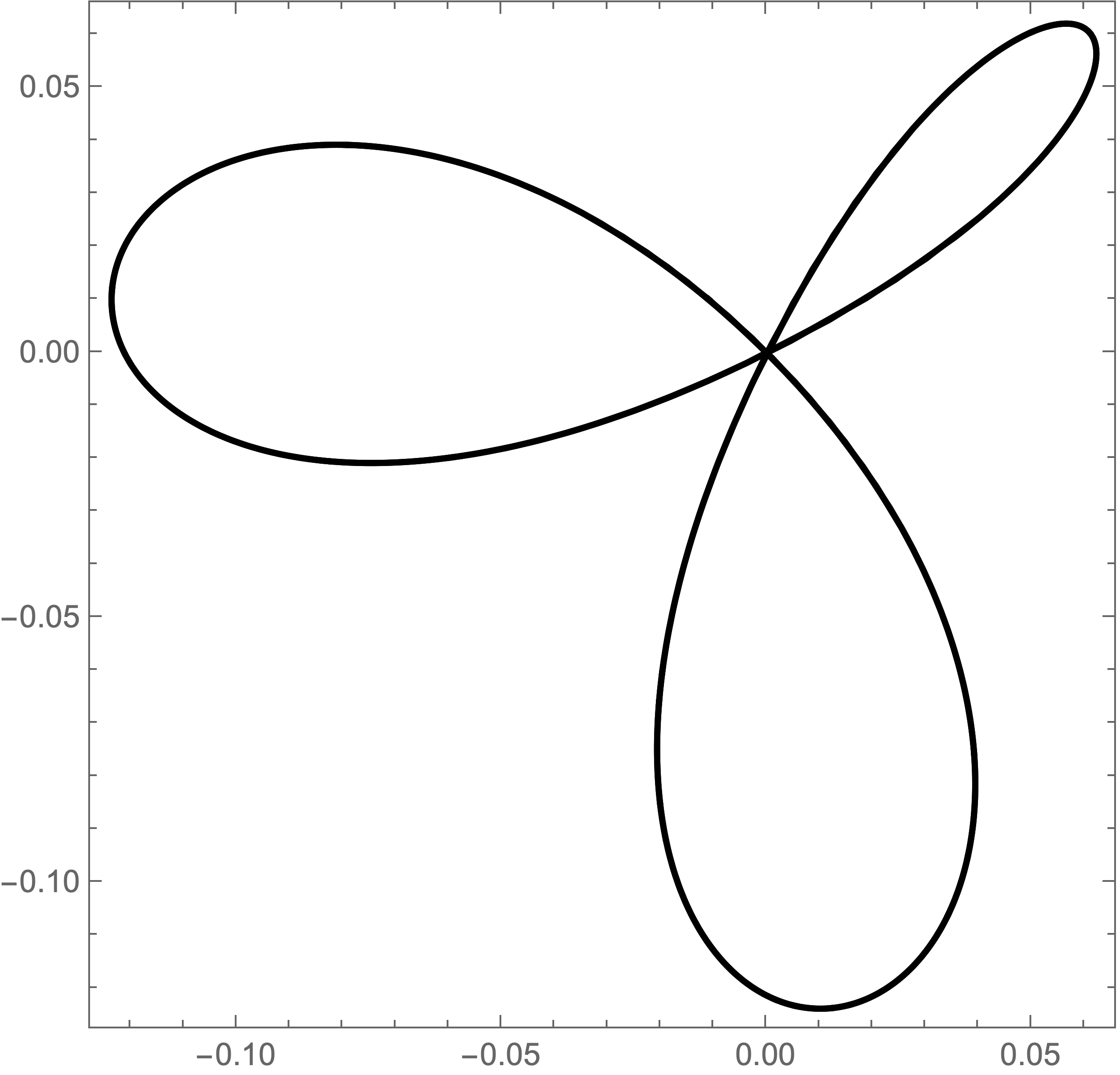}
            \caption{The curve $\Phi_{\tfrac12,P}(\Sph^1)$.}
            \label{fig:triangle_curves-gauss}
        \end{subfigure}
        \caption{The objects described in \Cref{ex:triangle}. The locus $V_{\frac{1}{2}}(P)$ is the black curve shown in \Cref{fig:triangle_curves-chambers},  parametrized by $\{(\cos \theta, \sin \theta, t)\}$ within the $(\theta,t)$-plane. The colorful regions in \Cref{fig:triangle_curves-chambers} are the regions of the parameter space on which the rational expression for $\capvol_P$ is fixed.}
        \label{fig:triangle_curves}
    \end{figure}
    Fix an edge of $P$ with endpoints $a$ and $b$. The intersection point of this edge with a hyperplane $H(u=t)$ can be expressed as a rational function of variables $u,t$ and coefficients depending on $a$ and $b$. Consequently, $\capvol_P(u,t)$ is itself a rational function when restricted to hyperplanes intersecting a fixed pair of edges. As we shall see later, this generalizes to polytopes and hyperplanes in any dimension.
    
    For example, let $u\in \Sph^1$ and suppose that
    \begin{equation}\label{eq:triangle_assumptions}
          t + u_1 + u_2 \geq 0, \quad 
           t - 2u_1 + u_2 \leq 0, \quad 
           t + u_1 - 2u_2 \leq 0.
    \end{equation}
    Under these conditions, the cap $P(u \leq t)$ consists of three vertices: the point $(-1,-1)$, together with one point on the edge $\conv\{(-1,-1),(2,-1)\}$ and one point on the edge $\conv\{(-1,-1),(-1,2)\}$ (corresponding to the red line in \Cref{fig:triangle_curves-planar}). In this case, the cap volume function can be written as
    \[
        \capvol_P(u_1,u_2,t) = \frac{{\left(t + u_{1} + u_{2}\right)}^{2}}{2 \, u_{1} u_{2}} \ .
    \]
    Solving the quadratic equation $\capvol_P(u_1,u_2,t) = \tfrac{1}{2} \vol_2(P)$ subject to the above constraints yields a unique solution
    \begin{equation*}\label{eq:triangle}
        t_{\frac12}(u) = - u_{1} - u_{2} + 3\sqrt{ \frac{u_{1} u_{2}}{2}  } \ .
    \end{equation*}
    The set of points $(u,t) \in \mathbb{R}^3$ with $u \in \Sph^1$ and $t = t_{\frac12}(u)$ satisfying \eqref{eq:triangle_assumptions} forms a portion of the $\tfrac12$-quantile locus. This portion corresponds to the part of the black curve shown in \Cref{fig:triangle_curves-chambers} that lies within the two red regions.
    Different regions in this figure correspond to combinatorially distinct intersections of $P$ with a hyperplane. We parametrize $\Sph^1$ as $(\cos(\theta),\sin(\theta))$, and identify $(\theta,t)$ with $(\pi+\theta,-t)$, since for $\alpha=\tfrac{1}{2}$ the upper and lower caps have equal volume and hence define the same hyperplane. Thus, regions in \Cref{fig:triangle_curves-chambers} with similar color correspond to the same unoriented hyperplanes. Accordingly, the hyperplanes in the left panel are colored according to the regions of the $(\theta,t)$-parameter space shown in the central panel. \Cref{fig:triangle_curves-gauss} shows the image of the map $\Phi_{\tfrac{1}{2},P}$ from \Cref{rmk:intersection-bodies}.
\end{example}

In the rest of the paper we will focus on the behavior of the $\alpha$-quantile locus of a convex polytope, or a union of such, in terms of semialgebraic geometry. 
The following lemma is a special version of Theorems 1.1 and 1.3 in \cite{BDC2025bestslices}, where we proved that given a polytope $P \subset \R^d$, there exists a decomposition of the space of all affine hyperplanes in $\R^d$ into finitely many cells, called \emph{slicing chambers}, with the properties that:
\begin{enumerate}
\item Two affine hyperplanes $H_1,H_2$ belong to the same slicing chamber if and only if $H_1,H_2$ intersect the same set of faces of $P$. In particular, $P\cap H_1$ is combinatorially equivalent to $P \cap H_2$ and they admit a combinatorially identical triangulation. \label{item:comb-info}
\item For fixed dimension $d$, the number of slicing chambers is bounded by a polynomial in the number of vertices of $P$.\label{item:polynomial-size}
\item Restricted to a fixed slicing chamber, the integral $\int_{P\cap H} g(x) dx$ of any given polynomial $g: \R^d \to \R$ is a rational function, whose variables are the coefficients of the defining equation of the hyperplane $H$. In particular, this holds for the volume of $P\cap H_+$ for the oriented halfspace defined by $H$. This rational function depends only on the combinatorial information described in \labelcref{item:comb-info}. \label{item:integral-rational}
\end{enumerate}

Following exactly the same methods we show that the cap-volume function of a union of polytopes is semialgebraic. When the dimension of the space is fixed, it is not restrictive to assume that the polytopes have disjoint interior, since one can always, in polynomial time, rewrite a union of (possibly intersecting) polytopes as a union of polytopes with disjoint interior. We comment on this well-known procedure in the following remark.
\begin{remark}\label{rmk:disjoint-interior}
Consider the polytopes $Q_1,\ldots,Q_m$, let $P =\bigcup_{i=1}^m Q_i$ and suppose first that the $Q_i$ are given by finite systems of linear inequalities. In fixed dimension, such an inequality description can be computed from the vertex description of each $Q_i$ in polynomial time. Let $\mathcal H$ be the collection of all affine hyperplanes supporting inequalities appearing in these descriptions. The arrangement induced by $\mathcal H$ subdivides $\mathbb R^d$ into polyhedral cells. Since each cell has a fixed sign pattern with respect to all inequalities defining the $Q_i$, the closure of every full-dimensional cell is either contained in a given $Q_i$ or is disjoint from its interior. Hence, by taking the closures of those full-dimensional cells which are contained in at least one $Q_i$, we obtain polytopes $R_1,\ldots,R_s$ with pairwise disjoint interiors such that $P=\bigcup_{j=1}^s R_j$.
When the dimension $d$ is fixed, the number of cells in an arrangement of $N$ hyperplanes in $\mathbb R^d$ is $O(N^d)$, and such an arrangement can be constructed in polynomial time in the input size. Thus, this preprocessing step has polynomial complexity in fixed dimension.
\end{remark}

\begin{remark}\label{rmk:unions}
    In what follows, we consider unions of polytopes. As input, we will assume that the union $P = Q_1 \cup \dots \cup Q_m$ is presented as some list $(Q_1,\dots,Q_m)$ of convex polytopes. The \emph{set of vertices} $\verts(P)$ or \emph{set of edges} of $P$ is the disjoint union of the sets of vertices or edges of $Q_1,\dots,Q_m$. In particular, even if $Q_i,Q_j$ have overlapping or nested edges, we still treat these as distinct objects. The \emph{total number} of vertices or edges of $P$ is the sum of vertices or edges of the subpolytopes $Q_1,\dots,Q_m$. Note that with this convention, these numbers depend on the non-unique decomposition into subpolytopes $Q_1,\dots,Q_m$. If $\mathcal P = (P_1,\dots,P_k)$ is a tuple of unions of polytopes, then the \emph{set of vertices} $\verts(\mathcal P)$ is the union of all vertices of all subpolytopes of $P_1,\dots,P_k$, after fixing a decomposition for every single one of them.
\end{remark}

\begin{lemma}[Semialgebraicity of the cap-volume function]\label{lem:cap-volume-semialg}
Let $P= Q_1 \cup \ldots \cup Q_m\subset\R^d$ be a union of finitely many full-dimensional polytopes. 
There exists a central hyperplane arrangement $\mathcal{H}_P\subset \R^{d+1}$ whose restriction to $\Sph^{d-1} \times \R$ partitions $\Sph^{d-1}\times\R$ into finitely many full-dimensional cells. For each such cell $\Omega$ we have 
\[
{\capvol_P}{\big\vert_\Omega}(u,t) = \frac{p_\Omega(u,t)}{q_\Omega(u,t)},
\]
for some polynomials $p_\Omega, q_\Omega \in \R[x_1,\ldots,x_{d+1}]$ such that $q_\Omega$ does not vanish on $\Omega$. In other words, the function $\capvol_P$ is a piecewise rational function. 
In fixed dimension, the number of pieces is polynomial in the total number of vertices of $P$. If $Q_1,\dots,Q_m$ have pairwise disjoint interiors, then the number of pieces is at most $O(|\verts(P)|^d)$ and the degrees of $p_\Omega(u,t)$ and $q_\Omega(u,t)$ are at most the total number of edges of $P$.
\end{lemma}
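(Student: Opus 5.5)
The plan is to reduce to the case of pairwise disjoint interiors via \Cref{rmk:disjoint-interior}, and then build the arrangement $\mathcal H_P$ out of the vertices. Once the subpolytopes $R_1,\dots,R_s$ have disjoint interiors, $\capvol_P=\sum_j \capvol_{R_j}$. So it suffices to find one arrangement on whose cells every $\capvol_{R_j}$ is rational, and then add the fractions. The arrangement I would take is
\[
\mathcal H_P=\{\,\{(a,b)\in\R^{d+1} : \langle a,v\rangle - b = 0\} \;:\; v\in \verts(P)\,\}.
\]
These are central hyperplanes in $\R^{d+1}$ (linear in $(a,b)$), and there is one for each vertex. Restricted to $\Sph^{d-1}\times\R$, they cut the space into cells $\Omega$. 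On each cell, every vertex $v$ has a constant sign of $\langle u,v\rangle-t$. Hence the hyperplane $H(u=t)$ meets the same set of edges of each $R_j$: an edge is crossed exactly when its endpoints have opposite signs. The cap $R_j(u\le t)$ therefore has constant combinatorial type on $\Omega$. This agrees with property \labelcref{item:comb-info} of the slicing chambers from \cite{BDC2025bestslices}.

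On a fixed cell, I would write the cap volume explicitly. Every vertex of $R_j(u\le t)$ is either an original vertex $v$ with $\langle u,v\rangle<t$, or an edge point. For an edge $[v,w]$ crossed by the hyperplane, the edge point is
\[
x_{vw}(u,t)=\frac{(t-\langle u,w\rangle)\,v+(\langle u,v\rangle-t)\,w}{\langle u,v\rangle-\langle u,w\rangle}.
\]
This is rational, and its denominator is linear in $(u,t)$ and nonzero on $\Omega$ because $v$ and $w$ lie strictly on opposite sides. The face lattice of the cap is constant on $\Omega$, so I would fix one combinatorial triangulation of the cap that stays valid throughout $\Omega$. A pulling triangulation from a fixed original vertex, or a cone over the cap's facets, works for this. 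Each simplex volume is $\tfrac1{d!}|\det(\cdot)|$ of differences of these vertices. The sign of each determinant is constant on $\Omega$, since the simplices stay nondegenerate there, so each volume is a rational function.

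The main obstacle is the degree bound. If we simply add up simplex volumes and clear denominators, the degree can blow up. To prevent this, I would put all terms over the common denominator $q_\Omega=\prod_{[v,w]}(\langle u,v\rangle-\langle u,w\rangle)$, where the product runs over the edges crossed on $\Omega$. Each factor is linear, so $\deg q_\Omega$ is at most the total number of edges. For the numerator I would argue homogeneity. The cap volume is homogeneous of degree $0$ in $(u,t)$ once we extend it to $a\in\R^d\setminus\{0\}$ via $H(a\le b)$, since rescaling $(a,b)$ does not change the halfspace. Hence $p_\Omega=\capvol\cdot q_\Omega$ is a polynomial of degree equal to $\deg q_\Omega$. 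Checking that it really is a polynomial needs a little care: each determinant involves at most $d$ edge points, so its denominators divide $q_\Omega$.

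The count of pieces then follows. An arrangement of $N=|\verts(P)|$ central hyperplanes in $\R^{d+1}$ has $O(N^{d})$ regions, by the standard bound for central arrangements. Each region meets the cylinder $\Sph^{d-1}\times\R$ in a set with a bounded number of components, in fact one, since the regions are cones through the origin. Non-full-dimensional pieces can be ignored, or absorbed by continuity. For a general union, the preprocessing in \Cref{rmk:disjoint-interior} produces polynomially many new vertices in fixed $d$, so the number of pieces is still polynomial.
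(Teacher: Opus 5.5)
Your proposal is correct and follows the paper's proof essentially step for step: reduction to disjoint interiors, the arrangement $(v,-1)^\perp$, the edge-point formula, a fixed pulling triangulation, the common denominator over crossed edges, and the central-arrangement count. The one variation is that you bound the numerator's degree via degree-$0$ positive homogeneity of the cap volume (together with your divisibility check) rather than the paper's multilinearity of the determinants, which is equally valid; note also a small slip: for $d=1$ a region can meet $\Sph^0\times\R$ in two components, so ``in fact one'' holds only for $d\ge 2$, though your $O(N^d)$ bound is unaffected.
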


This result is analogous to \cite[Proposition 5.5]{BBMS22:IntBodies} and \cite[Proposition 3.3]{BDC2025bestslices}, but we write the proof here for completeness. 
\begin{proof}
    By \Cref{rmk:disjoint-interior}, we can assume, without loss of generality, that the convex polytopes $Q_i$ have disjoint interior.  Let $\verts(P)$ be the set of vertices of $P$, namely the set of vertices of all convex subpolytopes $Q_i$ of $P$. Define the slicing arrangement, i.e., the hyperplane arrangement
    \[
    \mathcal{H}_P = \{ (v,-1)^\perp \mid v \in \verts(P)\} \subset \R^{d}\times\R,
    \]
    with coordinates $(u,t)$. Denote by $\Omega$ a full-dimensional connected component of $(\Sph^{d-1}\times\R)\setminus \mathcal{H}_P$; we call it a chamber. A point $(u,t)\in \Omega$ corresponds to a halfspace $H(u\leq t)$. By construction, the two sets $P(u_1\leq t_1)$ and $P(u_2\leq t_2)$ with $(u_1,t_1), (u_2,t_2)\in \Omega$ are combinatorially equivalent, since the hyperplanes $H(u_i = t_i)$ induce the same partition of the points in $\verts(P)$.
    Since the convex polytopes $Q_i$ have disjoint interior, we can prove our statement for a single one of them.
    We can then conclude by the fact that rational functions are closed under addition.

    Fix the index $i \in [m]$. The vertices of $Q_i(u\leq t)$ that are not vertices of $Q_i$ itself arise as the intersection of an edge $[v,w]$ with the bounding hyperplane $H(u=t)$, and their coordinates are rational functions in $(u,t)$:
    \begin{equation}\label{eq:verts_cap}
    [v,w] \cap H(u=t) = 
    v +\frac{t-\langle u,v\rangle}{\langle u,w-v\rangle}(w-v),
    \end{equation}
    where the denominator is nonzero precisely because the edge crosses the hyperplane transversely.
    Choose a fixed triangulation scheme from the vertex set of each cap $Q_i(u \leq t)$ (for instance, a pulling triangulation with a fixed vertex order). 
    Since the combinatorial type does not change while $(u,t)\in \Omega$, the same abstract triangulation applies to all $Q_i(u \leq t)$. 
    The cap-volume is then a finite sum of volumes of the simplices in the triangulation, each given by the determinant of a matrix whose columns are the vertices of the simplices, namely those in \eqref{eq:verts_cap} and the vertices of $Q_i$ on the considered side of the hyperplane. Each such determinant is a rational function in $(u,t)$ defined over $\Omega$. Thus, $\capvol_{Q_i}$ is a piecewise rational function, where the pieces are chambers $\Omega$. Therefore, after taking the union, $\capvol_P$ is also a piecewise rational function.

    Let $E_\Omega$ denote the set of edges of $P$ intersected by a hyperplane with parameters in $\Omega$, and let $n_\Omega$ denote its cardinality. 
    After clearing denominators, the cap-volume function can be written as
    \[
    \capvol_P(u,t)\big|_\Omega=\frac{p_\Omega(u,t)}{q_\Omega(u)},
    \]
    where
    \[
    q_\Omega(u)=\prod_{[v,w]\in E_\Omega}\langle u,w-v\rangle .
    \]
    Thus $\deg q_\Omega\le n_\Omega$. Since each determinant appearing in a fixed triangulation is multilinear in the vertices, after clearing this common denominator, the numerator has degree at most $n_\Omega$. See \cite[Proposition 3.3]{BDC2025bestslices} for details.
    In particular, $n_\Omega$ is at most the total number of edges of $P$, and for fixed dimension this is polynomial in the number of vertices of $P$. 

    Lower-dimensional cells of $\mathcal H_P$ correspond to nongeneric slices passing through vertices or higher-dimensional faces. As $\capvol$ is a continuous function, the rational function on a low-dimensional cell of the arrangement is a specialization of the rational functions on adjacent full-dimensional chambers.
    
    To conclude the bound on the number of pieces, note that $\mathcal H_P$ is a central arrangement with $|\verts(P)|$ hyperplanes, hence the number of full-dimensional chambers is at most $O(|\verts(P)|^d)$ for fixed $d$ (cf. \cite{Zaslavsky}). Intersecting this arrangement with
    $\Sph^{d-1}\times\mathbb R$ gives at most the same order of cells.
\end{proof}

A piecewise rational function is, in particular, semialgebraic. 
As a direct consequence, the $\alpha$-quantile locus of a union of polytopes is a semialgebraic subset of $\R^{d+1}$. 
Moreover, recall from the beginning of this section that for convex $d$-dimensional sets $P$, the locus $V_\alpha(P)$ is the graph of a continuous function. This yields the following statement.
\begin{corollary}\label{prop:quantile-locus-semi-alg}
Let $P= Q_1 \cup \ldots \cup Q_m\subset\R^d$ be a union of finitely many full-dimensional polytopes with disjoint interior and let $\alpha\in (0,1)$. Then, the $\alpha$-quantile locus $V_\alpha(P)$ is a semialgebraic set. 
If, in addition, $P$ is convex, then $V_\alpha(P)$ is the graph of a continuous semialgebraic function, and hence has dimension $d-1$.
\end{corollary}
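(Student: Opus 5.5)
We will deduce both claims of \Cref{prop:quantile-locus-semi-alg} from \Cref{lem:cap-volume-semialg} and the continuity facts recalled at the beginning of \Cref{sec:HS}.

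\textbf{Semialgebraicity of $V_\alpha(P)$.} Write $\Sph^{d-1}\times\R$ as the union of the finitely many cells $\Omega$ of $\mathcal H_P$, including the lower-dimensional ones. Each cell is a semialgebraic set: it is cut out by the sign conditions $\langle v,u\rangle - t \gtrless 0$ or $=0$ for $v\in\verts(P)$, together with $\|u\|^2=1$.

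On a full-dimensional chamber $\Omega$ we have $\capvol_P = p_\Omega/q_\Omega$ with $q_\Omega\neq 0$. Hence
\[
V_\alpha(P)\cap\Omega=\{(u,t)\in\Omega : p_\Omega(u,t)-\alpha\,\vol_d(P)\,q_\Omega(u,t)=0\},
\]
which is semialgebraic, since $\vol_d(P)$ is a constant.

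On a lower-dimensional cell $\sigma$, the proof of \Cref{lem:cap-volume-semialg} shows that $\capvol_P$ is obtained by continuity from an adjacent chamber $\Omega$. We must make sure the formula $p_\Omega/q_\Omega$ extends with a nonvanishing denominator. We will handle this by rewriting the cap as a union over a triangulation in which the vertices lying on the hyperplane are counted on one side. The cleaner alternative avoids this bookkeeping entirely: note that the graph of $\capvol_P$ is the closure of the union of the graphs over full-dimensional chambers. Closures of semialgebraic sets are semialgebraic, and $\capvol_P$ is continuous, so its graph is semialgebraic. Then $V_\alpha(P)$ is the preimage of the value $\alpha\vol_d(P)$, which is again semialgebraic, e.g.\ by quantifier elimination.

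\textbf{The convex case.} If $P$ is convex, its interior is connected. By the uniqueness and continuity result of \cite[Lemma 1]{BaranyHubardJeronimo2008} quoted above, $V_\alpha(P)$ is the graph of a continuous function $u\mapsto t_\alpha(u)$ on $\Sph^{d-1}$. This function is semialgebraic because its graph $V_\alpha(P)$ is a semialgebraic set. The projection $V_\alpha(P)\to\Sph^{d-1}$ is a semialgebraic homeomorphism, with inverse $u\mapsto(u,t_\alpha(u))$. Semialgebraic dimension is invariant under semialgebraic homeomorphisms, so $\dim V_\alpha(P)=\dim\Sph^{d-1}=d-1$.

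The only delicate point is the treatment of the boundary cells, and the closure-of-graph argument is the route we would take to avoid it.
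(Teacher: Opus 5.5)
Your proof is correct and follows the paper's route. Semialgebraicity of $V_\alpha(P)$ comes from the piecewise rational description of $\capvol_P$ in \Cref{lem:cap-volume-semialg}, and the convex case comes from the uniqueness and continuity of $t_\alpha(u)$ in \cite[Lemma 1]{BaranyHubardJeronimo2008}. Your closure-of-graphs argument, which uses continuity of $\capvol_P$ and density of the open chambers, is a rigorous way to handle the lower-dimensional cells. The paper only remarks that the formulas there are specializations of those on adjacent chambers, and the written denominators $q_\Omega$ can vanish on such cells, for instance when the hyperplane contains an edge.
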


In principle, we can repeat an analogous construction for the extreme values $\alpha = 0,1$. However, in this case, the space of hyperplanes that do not intersect the compact set $P$ is full-dimensional, and hence $\dim(V_\alpha(P)) = d$.

\begin{definition}[$\alpha$-cuts]
    Consider $\alpha = (\alpha_1,\ldots,\alpha_k) \in (0,1)^k$. We say that a hyperplane $H(u = t)$ is an $\alpha$-cut of the sets $S_1,\ldots,S_k \subset \R^d$ if $\capvol_{S_i}(u,t) = \alpha_i \vol_d (S_i)$ for all $i=1,\dots,k$. We denote the set of all $\alpha$-cuts by 
    \[
    \HS_\alpha(S_1,\ldots,S_k) = \{(u,t) \in \Sph^{d-1} \times \R \mid H(u=t) \text{ is an $\alpha$-cut}\}.
    \]
\end{definition}

In the polytopal case, we can construct the $\alpha$-quantile loci of multiple unions of polytopes to obtain their $\alpha$-cuts. 
Indeed, let $\mathcal{P} = (P_1,\ldots,P_k)$ be a tuple of unions of polytopes, where each $P_i \subset \R^d$ is itself the union of finitely many full-dimensional polytopes. For any $\alpha\in (0,1)^k$, the set of $\alpha$-cuts of $\mathcal P$ equals
\[
\HS_\alpha(\mathcal{P}) = \bigcap_{i=1}^k V_{\alpha_i}(P_i).
\]
If $k\leq d$ and $\alpha_i = \frac{1}{2}$ for all $i \in [k]$, then $\HS_\alpha(\mathcal{P}) \neq \emptyset$ according to the ham-sandwich theorem.
We note the following feature of the dimension of the set of $\alpha$-cuts.

\begin{proposition}\label{prop:space-of-hs-solutions}
    Let $\mathcal{P} = (P_1,\ldots,P_k)$ where each $P_i \subset \R^d$ is the union of finitely many full-dimensional polytopes, and $\alpha = (\alpha_1,\ldots,\alpha_k)\in (0,1)^k$. The set $\HS_\alpha(\mathcal{P})$ of its $\alpha$-cuts is semialgebraic. Moreover, allowing the number $k$ of measures to vary, every integer between $-1$ and $d$ can occur as the dimension of such a set $\HS_\alpha(\mathcal P)$.
\end{proposition}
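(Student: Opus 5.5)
The plan has two parts: semialgebraicity follows directly from \Cref{prop:quantile-locus-semi-alg}, and the dimension claim is settled by exhibiting one explicit family of examples for each value. For the first part, each $V_{\alpha_i}(P_i)$ is semialgebraic by \Cref{prop:quantile-locus-semi-alg}, after using \Cref{rmk:disjoint-interior} to assume the subpolytopes of each $P_i$ have disjoint interiors. Then $\HS_\alpha(\mathcal P)=\bigcap_{i} V_{\alpha_i}(P_i)$ is a finite intersection of semialgebraic sets, hence semialgebraic. Its dimension is therefore well defined, with the convention that $\dim\emptyset=-1$.

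For the intermediate dimensions $0\le j\le d-1$ I will use cubes. Let $C$ be a full-dimensional cube with center $c$. For every $u\in\Sph^{d-1}$, the hyperplane $H(u=\langle u,c\rangle)$ bisects $C$ by central symmetry. By uniqueness of $t_{1/2}(u)$ for sets with connected interior, this is the only bisecting value. Hence $V_{1/2}(C)=\{(u,t): t=\langle u,c\rangle\}$, i.e.\ the bisecting hyperplanes of $C$ are exactly the oriented hyperplanes through $c$.

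Now take $k=d-j$ cubes $C_1,\dots,C_k$ whose centers $c_1,\dots,c_k$ are affinely independent, and set $\alpha=(\tfrac12,\dots,\tfrac12)$. Then $\HS_\alpha$ consists of the oriented hyperplanes containing the $(k-1)$-flat $L=\operatorname{aff}(c_1,\dots,c_k)$. Translating $c_1$ to the origin, these are the pairs $(u,\langle u,c_1\rangle)$ with $u$ a unit vector orthogonal to $L-c_1$. This set is a sphere $\Sph^{d-k}$, so it has dimension $d-k=j$; as $k$ runs from $1$ to $d$ we obtain every $j$ from $d-1$ down to $0$. The same construction with $k=d+1$ affinely independent centers gives the empty set, i.e.\ dimension $-1$. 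Alternatively, $P_1=P_2$ with $\alpha=(\tfrac14,\tfrac12)$ is empty because $\capvol$ cannot equal two different values.

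The remaining value $d$ requires a non-convex measure, since for a single convex polytope the locus is $(d-1)$-dimensional. Let $P_1=C\cup C'$ with $C,C'$ two congruent cubes far apart along $e_1$, and $\alpha_1=\tfrac12$. Every hyperplane $H(u=t)$ strictly separating the cubes with $C$ on the side $\langle u,x\rangle\le t$ has cap volume exactly $\vol_d(C)=\tfrac12\vol_d(P_1)$. Strict separation is an open condition in $(u,t)$, and it is nonempty (take $u=e_1$ and $t$ between the cubes). So $\HS_{\alpha}(P_1)$ contains a nonempty open subset of $\Sph^{d-1}\times\R$ and has dimension $d$, the maximum possible.

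I expect the main obstacle to be rigor in the cube step: showing that the bisecting locus is exactly the set of hyperplanes through the center, not merely that it contains them. This is where uniqueness of $t_{1/2}(u)$ for connected interior is used. The remaining steps are explicit computations.
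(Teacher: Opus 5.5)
Your proposal is correct and follows essentially the same route as the paper: semialgebraicity via the finite intersection $\bigcap_i V_{\alpha_i}(P_i)$, translated cubes with affinely independent centers giving spheres of dimension $d-k$, and a union of two separated cubes giving an open, hence $d$-dimensional, set of cuts. The main difference is your witness for the empty case ($d+1$ cubes with affinely independent centers, or one measure repeated with two different proportions) instead of the paper's centrally symmetric body $P_1$ with a rescaled copy $P_2=\lambda P_1$ and $\alpha_1=\tfrac12\neq\alpha_2$; both work, and your use of the uniqueness of $t_{1/2}(u)$ to show that a cube is bisected exactly by the hyperplanes through its center justifies a step the paper only asserts.
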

\begin{proof}
    Since $V_{\alpha_i}(P_i)$ is semialgebraic by \Cref{prop:quantile-locus-semi-alg}, and since this property is preserved by finite intersections, $\HS_\alpha(\mathcal{P})$ is also semialgebraic for any $\alpha\in (0,1)^k$. 
    There exist instances, in any dimension $d$, of tuples with empty $\alpha$-cuts. For this, consider any tuple $\mathcal{P}$ with symmetric convex bodies $P_1=-P_1$ and $P_2 = \lambda P_1$ for some $\lambda\neq 1$, both containing the origin in their interior. Then, if $\alpha_1 = \frac{1}{2}$ and $\alpha_2 \neq \frac{1}{2}$, there are no $\alpha$-cuts. Indeed, the hyperplanes that cut $P_1$ in half are all and only those that contain the origin. Since this is also the center of symmetry of $P_2$, it is impossible to obtain a simultaneous cut of the two bodies with different proportions. This implies 
    \[
    \HS_\alpha(\mathcal{P}) = \emptyset, \qquad\dim \HS_\alpha(\mathcal{P}) = -1.
    \]
    Next, we construct $\mathcal P$ and $\alpha \in (0,1)^k$ with $\dim(\HS_\alpha(\mathcal P)) = m$ for any $m = 0,\dots,d-1$.
    For any given $m$, let $l = d-m$ and $k \geq l$. For 
    $i=1,\dots,l$, choose $P_i = [-1,1]^d + v_i$ where $v_1,\dots,v_{l}$ are linearly independent vectors. For $i = l+1,\dots,k$, choose $P_i = \lambda_i[-1,1]^d + v_l$ with pairwise distinct $\lambda_i$. For all $i=1,\dots,k$ let $\alpha_i = \frac12$. Then, the $\frac{1}{2}$-quantile locus of $P_i$ consists of all pairs $(u,t)$ such that $H(u=t)$ contains the point $v_i$, namely
    \[
        V_{\frac12}(P_i) = \{(u,t) \in \Sph^{d-1} \times \R \mid \langle u, v_i \rangle = t \} = \{ (u,\langle u,v_i \rangle) \in \Sph^{d-1} \times \R  \} \ ,
    \]
    which is a $(d-1)$-dimensional sphere. 
    The equations $\langle u,v_i-v_1\rangle=0$, for $i=2,\ldots,l$, cut $\Sph^{d-1}$ by $l-1$ independent linear equations. Hence the solution set is a sphere of dimension
    \[
    (d-1)-(l-1)=d-l=m.
    \]
    Finally, let $P_i = (\frac{1}{2}[-1,1]^d + (i e_1 + e_d) )\cup (\frac{1}{2}[-1,1]^d + (i e_1 - e_d) )$ be the union of two cubes, appropriately shifted for all $i=1,\ldots,k$. Then, there is a $d$-dimensional set of hyperplanes around $(u,t) = (e_d,0)$ that belong to $\HS_\alpha(\mathcal{P})$ for $\alpha_1 = \ldots = \alpha_k =1/2$.
\end{proof}

In the remainder of this section, we focus on the \AlphaHS{} algorithm, namely on \Cref{thm:generalized_HS_intro}. As input, it takes rational polytopal measures $P_1,\dots,P_k$, and computes the set of all their $\alpha$-cuts, for $\alpha = (\alpha_1,\dots,\alpha_k) \in (0,1)^k\cap \Q^k.$
We sketch the algorithm in \Cref{alg:alpha-HS}.

\begin{algorithm}[ht]
\caption{\AlphaHS{} Algorithm}
\label{alg:alpha-HS}
\begin{algorithmic}[1]
\Require A tuple $\mathcal P=(P_1,\ldots,P_k)$ of rational polytopal measures in $\mathbb R^d$, where each $P_i$ is given as a finite union of full-dimensional rational polytopes, and target values $\alpha=(\alpha_1,\ldots,\alpha_k)\in(0,1)^k \cap \Q^k$.
\Ensure A quantifier-free semialgebraic description of $\HS_\alpha(\mathcal P)$, and sample points in its connected components.

\For{$i=1,\ldots,k$}
    \State Construct the slicing arrangement
        $\mathcal H_{P_i}
        =
        \{(v,-1)^\perp \mid v\in \verts(P_i)\}
        \subset \mathbb R^{d+1}$
    in the parameter space with coordinates $(u,t)$.
    \For{each cell $\Omega$ of $\mathcal H_{P_i}$}
        \State Compute the rational expression $\capvol_{P_i}(u,t)$ for $(u,t) \in \Omega$
    \EndFor
    \State Define the piecewise rational set $V_{\alpha_i}(P_i)$.
\EndFor

\State Apply a semialgebraic-set algorithm to $\HS_\alpha (\mathcal{P}) = \cap V_{\alpha_i}(P_i)$ to decide emptiness, dimension, and sample one point in each connected component.
\end{algorithmic}
\end{algorithm}

We now show the correctness of the algorithm, and that the complexity of this algorithm is polynomial in the total number of input vertices.

\begin{proof}[Proof of \Cref{thm:generalized_HS_intro}]
    Let $n$ be the total number of vertices of all input subpolytopes of $P_1,\dots,P_k$ after the disjoint-interior refinement (cf. \Cref{rmk:disjoint-interior,rmk:unions}). The common slicing arrangement 
    $\mathcal H_{\mathcal P}
        =
        \{(v,-1)^\perp \mid v\in \verts(\mathcal P)\}$
         is a central arrangement of at most $n$ hyperplanes in $\mathbb R^{d+1}$.
        Although \Cref{alg:alpha-HS} constructs the arrangements $\mathcal H_{P_i}$ separately, for the complexity analysis we may equivalently refine them to the common arrangement $\mathcal H_{\mathcal P}$. On this refinement, for every $i \in [k]$, the function $\capvol_{P_i}$ has a fixed rational function expression on each cell.
         The number of full-dimensional cells of $\mathcal H_{\mathcal P}$ is at most $O(n^d)$ for fixed $d$ (cf. \Cref{lem:cap-volume-semialg} for the case where $P$ is the union of all subpolytopes).
    On each cell $\Omega$, each cap-volume function $\capvol_{P_i}(u,t)$ is represented by a rational function
    \[
    \capvol_{P_i}(u,t)\big|_\Omega
    =
    \frac{p_{i,\Omega}(u,t)}{q_{i,\Omega}(u)}
    \]
    with, as shown in \Cref{lem:cap-volume-semialg},
    $\deg p_{i,\Omega},\deg q_{i,\Omega}\le e_i\le O(n^2)$,
    where $e_i$ is the total number of edges of the subpolytopes supporting $P_i$. Thus, the equation
    \[
    \capvol_{P_i}(u,t)=\alpha_i\vol_d(P_i)
    \]
    is, when restricted to $\Omega$, equivalent to a polynomial equation of degree at most $O(n^2)$. The inequalities defining $\Omega$ are linear, and the sphere condition $\|u\|^2=1$ has degree $2$.

    Consequently, $\HS_\alpha(\mathcal P)$ is described by a quantifier-free formula (cf. \Cref{prop:space-of-hs-solutions}) in $d+1$ variables using $s=O(n^d(n+k))$ polynomials, all of degree at most $D = O(n^2)$. Since $d$ is fixed, standard algorithms for semialgebraic sets have complexity $(sD)^{O(d)}$. Therefore, they run in polynomial time in the input parameters $n$ and $k$.
\end{proof}

\begin{proof}[Proof of \Cref{thm:HS_1/2_intro}]
    This is \Cref{thm:generalized_HS_intro} for $k = d$ and $\alpha_1 = \dots = \alpha_d =\frac{1}{2}$.
\end{proof}

\begin{example}[The set of $\alpha$-cuts of two measures in dimension $2$]\label{ex:ham-sandwich_2d}
Let $\mathcal{P}=\{P_1,P_2\}$, where $P_1$ is the triangle from \Cref{ex:triangle}, and 
\[
P_2 = \conv\{(-1,0), (2,0), (1,-3)\} \ \cup \ \conv\{(1,0), (1,1), (0,1), (0,0)\} .
\]
We run our algorithm with $\alpha = (\tfrac{1}{2},\tfrac{1}{4})$ and construct the associated piecewise rational functions $\capvol_{P_i}$. The regions on which these functions are fixed rational functions are depicted in different colors in \Cref{fig:triangle_curves-chambers} (for $P_1$) and \Cref{fig:planar_hs-chambers} (for $P_2$).
These regions are not polyhedral, as we visualize them in $(\theta,t)$-coordinates rather than in the three-dimensional embedding $(\cos\theta,\sin\theta,t)$. In both \Cref{fig:triangle_curves-chambers,fig:planar_hs-chambers}, the black curves represent the $\alpha_i$-quantile loci for $\alpha_1=\tfrac{1}{2}$ and $\alpha_2=\tfrac{1}{4}$, respectively. To determine the lines that simultaneously cut both sets in the prescribed proportions, we intersect these two curves (see \Cref{fig:planar_hs-intersection}). This yields two red intersection points, corresponding to the two red lines in \Cref{fig:planar_hs-polygons}.
\begin{figure}[hb]
    \centering
    \begin{subfigure}[t]{0.33\textwidth}
        \centering
        \includegraphics[height=4.8cm]{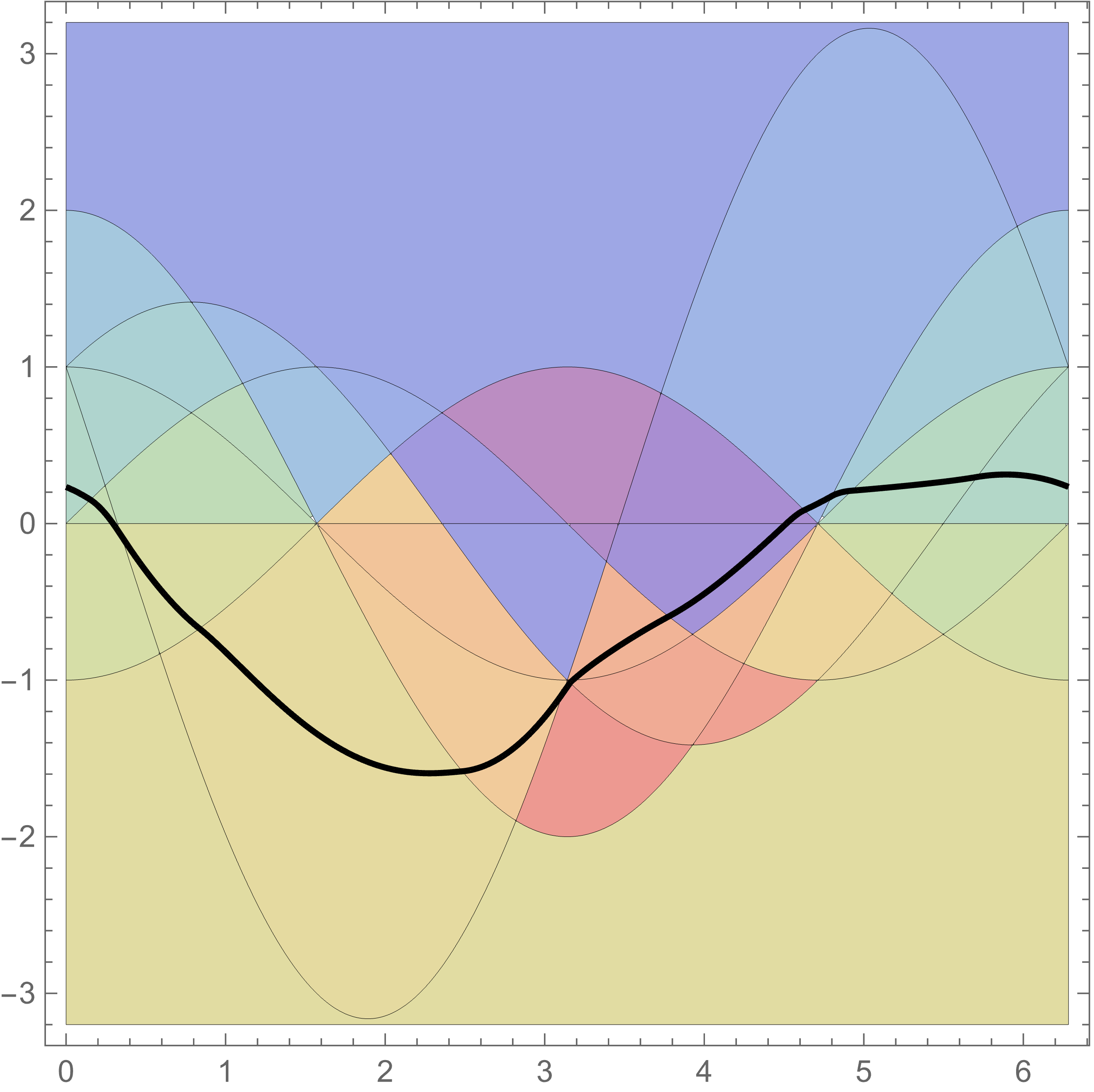}
        \caption{The cell decomposition for $\capvol_{P_2}$ together with the black $\tfrac{1}{4}$-quantile locus $V_{\tfrac{1}{4}}(P_2)$.}
        \label{fig:planar_hs-chambers}
    \end{subfigure}
    \hspace{1em}
    \begin{subfigure}[t]{0.33\textwidth}
        \centering
        \includegraphics[height=4.8cm]{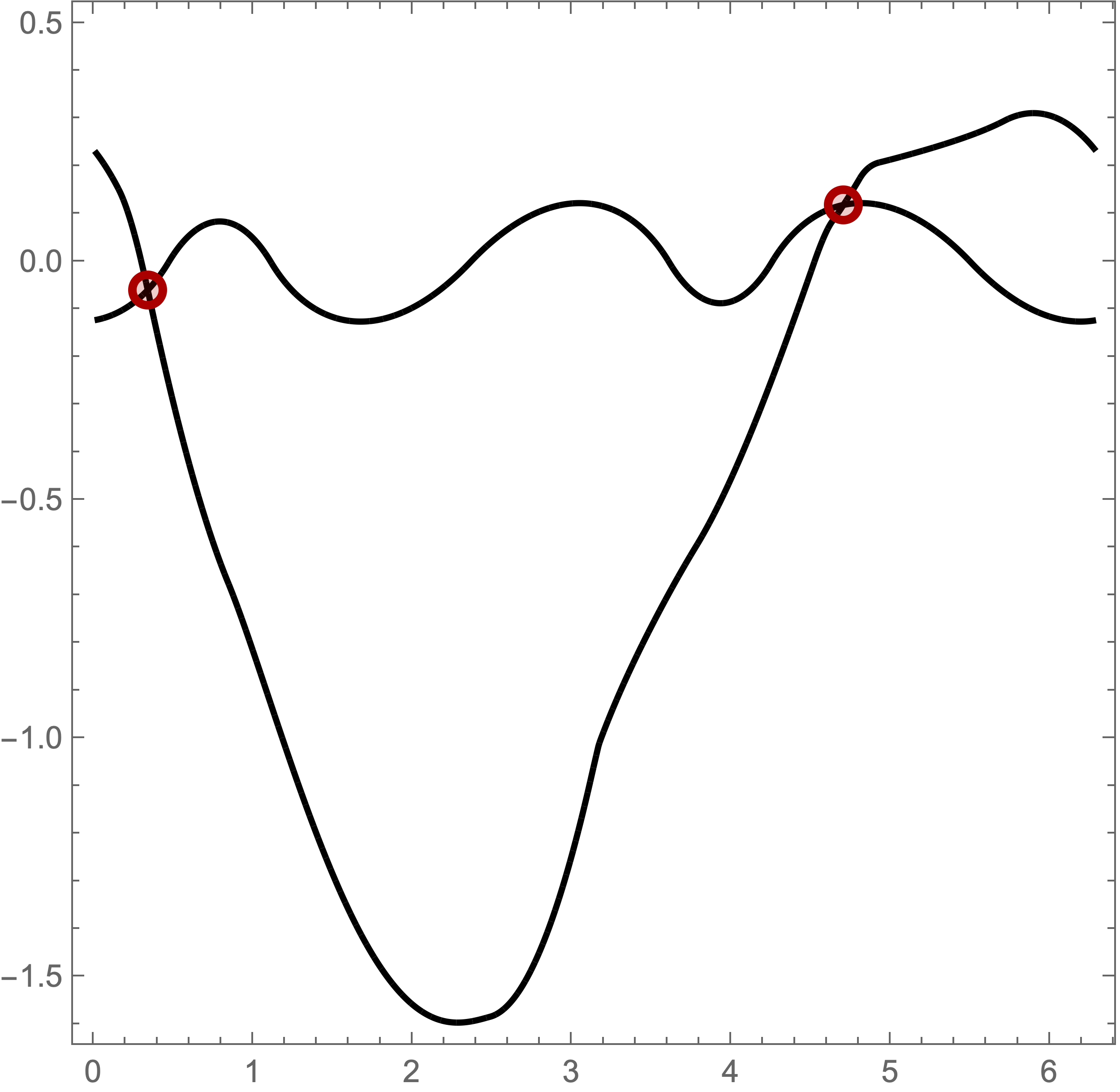}
        \caption{The red intersection points of the curves $V_{\tfrac{1}{2}}(P_1)$ and $V_{\tfrac{1}{4}}(P_2)$.}
        \label{fig:planar_hs-intersection}
    \end{subfigure}
    \hspace{1em}
    \begin{subfigure}[t]{0.2\textwidth}
        \centering
        \raisebox{0pt}[\height][0pt]{%
        \begin{tikzpicture}[scale=0.9]

\definecolor{tri_color}{named}{MidnightBlue}   
\definecolor{union_color}{named}{YellowOrange}   
\definecolor{hs1}{named}{BrickRed}   
\definecolor{hs2}{named}{BrickRed}   

\newcommand{\drawlineABC}[4]{%
  \def\a{#1}
  \def\b{#2}
  \def\c{#3}

  \pgfmathparse{abs(\b) > 0 ? 1 : 0}
  \ifnum\pgfmathresult=1
    \pgfmathsetmacro{\xA}{0}
    \pgfmathsetmacro{\yA}{-\c/\b}
  \else
    \pgfmathsetmacro{\xA}{-\c/\a}
    \pgfmathsetmacro{\yA}{0}
  \fi

  \pgfmathsetmacro{\dx}{\b}
  \pgfmathsetmacro{\dy}{-\a}

  \draw[ultra thick, #4]
    ({\xA - 10*\dx}, {\yA - 10*\dy}) --
    ({\xA + 10*\dx}, {\yA + 10*\dy});
}

\begin{scope}[scale=1]
    \clip (-1.3,-3.2) rectangle (2.3,2.2);

    \filldraw[fill=tri_color!50!black, fill opacity=0.2, draw=tri_color!50!black, very thick]
    (-1,-1) -- (-1,2) -- (2,-1) -- cycle;
    \foreach \p in {(-1,-1), (-1,2), (2,-1)}{
      \fill[tri_color!50!black] \p circle (2pt);
    }

    \filldraw[fill=union_color!80!black, fill opacity=0.2, draw=union_color!80!black, very thick]
    (-1,0) -- (2,0) -- (1,-3) -- cycle;
    \filldraw[fill=union_color!80!black, fill opacity=0.2, draw=union_color!80!black, very thick]
    (1,0) -- (1,1) -- (0,1) -- (0,0) -- cycle;
    \foreach \p in {(-1,0), (2,0), (1,-3),(1,0), (1,1), (0,1), (0,0)}{
      \fill[union_color!80!black] \p circle (2pt);
    }

    \drawlineABC{0.9449360820143504}{0.3272549478730134}{0.0580343067614477}{hs1}
    \drawlineABC{-0.014162128911249652}{-0.9998997120235115}{-0.12039553098094291}{hs2}
  
  \node[] at (-0.3,1.8) {\textcolor{tri_color!50!black}{$P_1$}}; 
  \node[] at (1.3,0.3) {\textcolor{union_color!80!black}{$P_2$}}; 
\end{scope}
\end{tikzpicture}%
        }
        \caption{The two $\alpha$-cuts of $P_1$ and $P_2$.}
        \label{fig:planar_hs-polygons}
    \end{subfigure}
    \caption{The $\alpha$-cuts of $P_1$ and $P_2$ from \Cref{ex:ham-sandwich_2d}.}
    \label{fig:planar_hs}
\end{figure}

We can also derive an explicit algebraic description of these intersection points. For instance, the leftmost intersection point is the unique feasible solution of the following polynomial system of equations and inequalities:
\begin{gather*}
    -u_1 u_2 + 8 u_1 t -4 u_2 t + u_1^2 - 2u_2^2 - 2t^2=0, \\
    33 u_1 u_2 + 24 u_1 t - 10 u_1^2 + 12 t^2=0,\\
    u_1^2+u_2^2=1,\\
    -u_2 + t \leq 0,\quad 
    t \leq 0,\quad 
    -u_1 + 3 u_2 + t \leq 0,\quad 
    -u_1 - u_2 - t \leq 0,\quad 
    -u_1 + 2 u_2 - t \leq 0.
\end{gather*}
Up to four-digit precision, the unique solution is $(u,t) = (0.9449, 0.3272, -0.0580)$. This solution can be computed, for example, using the \texttt{CylindricalDecomposition} command in \texttt{Mathematica}, which returns the result in $0.007$ seconds.
\end{example}

\begin{example}[The set of $\alpha$-cuts of three measures in dimension $3$]\label{ex:ham-sandwich_3d}
    Consider two slices of bread, obtained as the convex hulls of the two sets of vertices 
    \begin{gather*}
    \left\{
    (0, 0, -3),\, (8, 0, -3),\, (0, 5, -3),\, (0, 0, -1),\, (8, 0, -1),\, (0, 5, -1)
    \right\}, \\
    \left\{
    (0, 0, 3),\, (8, 0, 3),\, (0, 5, 3),\, (0, 0, 1),\, (8, 0, 1),\, (0, 5, 1)
    \right\},
    \end{gather*}
    a slice of ham, the convex hull of 
    \[
    \left\{
    (1, -1, 0),\, (1, -1, 1),\, (7, 1, 0),\, (6, 1, 1),\, (1, 5, 0),\, (1, 4, 1),\, (-1, 4, 0),\, (-1, 4, 1)
    \right\},
    \]
    and a piece of cheese, namely the polytope with vertices
    \[
    \left\{
    (0, 0, 0),\, (0, 0, -1),\, (7, 0, 0),\, (7, 0, -1),\, (0, 4, 0),\, (0, 4, -1),\, (7, 4, 0),\, (7, 4, -1)
    \right\}
    \]
    as displayed in \Cref{fig:ham-sandwich_polys}.
    \begin{figure}[ht]
        \centering
        \includegraphics[height=4.5cm]{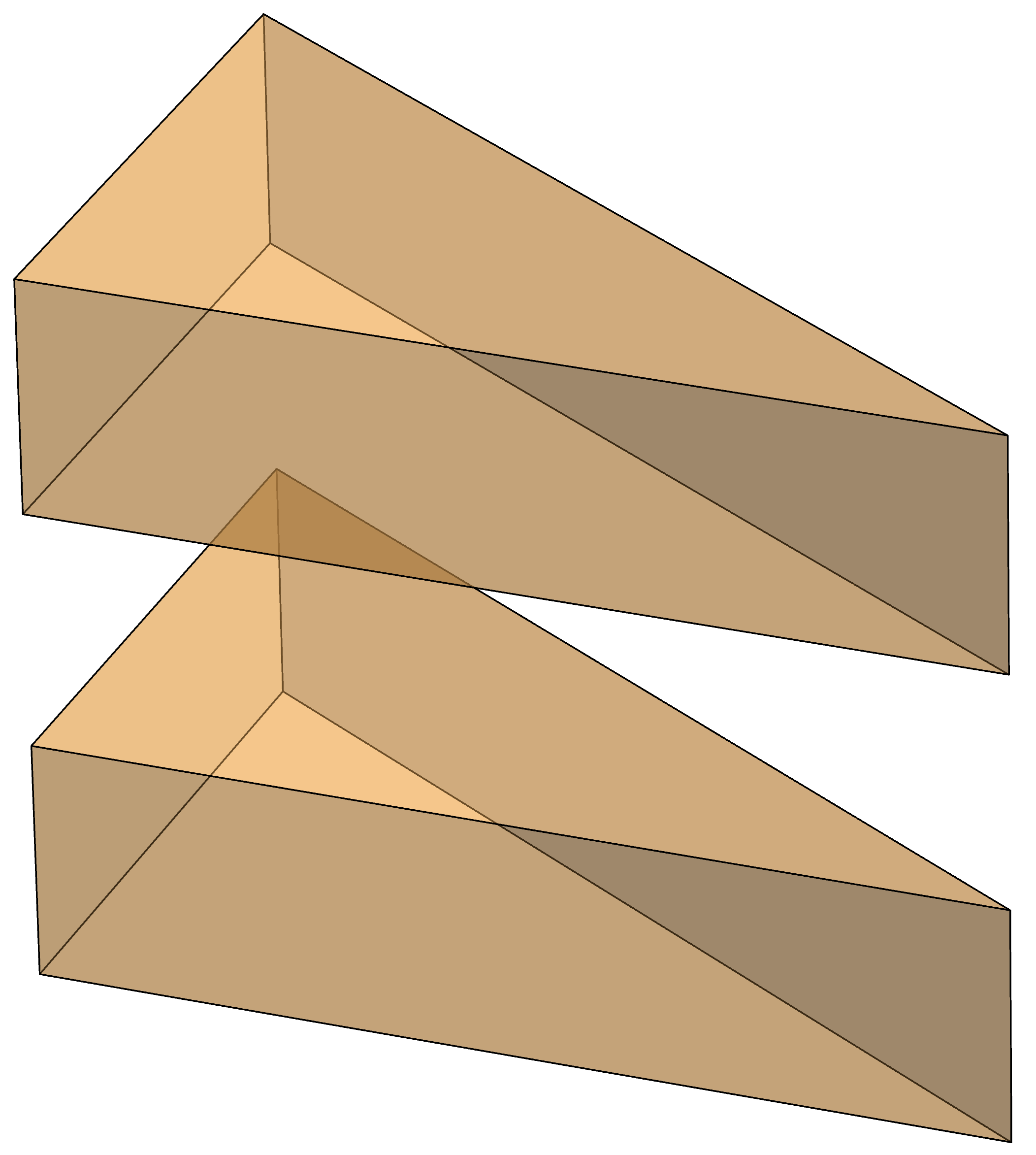}
        \qquad
        \includegraphics[height=4.5cm]{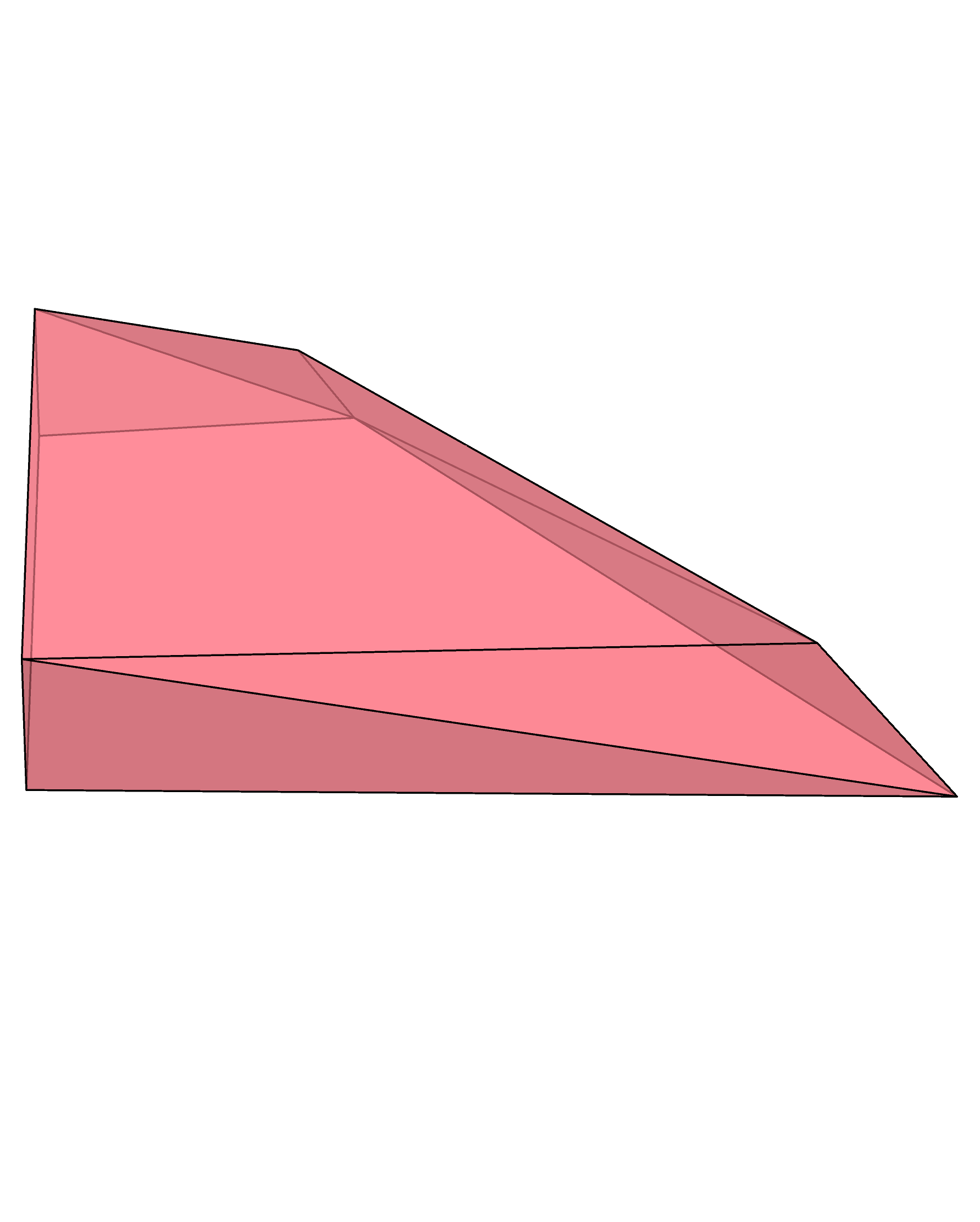}
        \qquad
        \includegraphics[height=4.5cm]{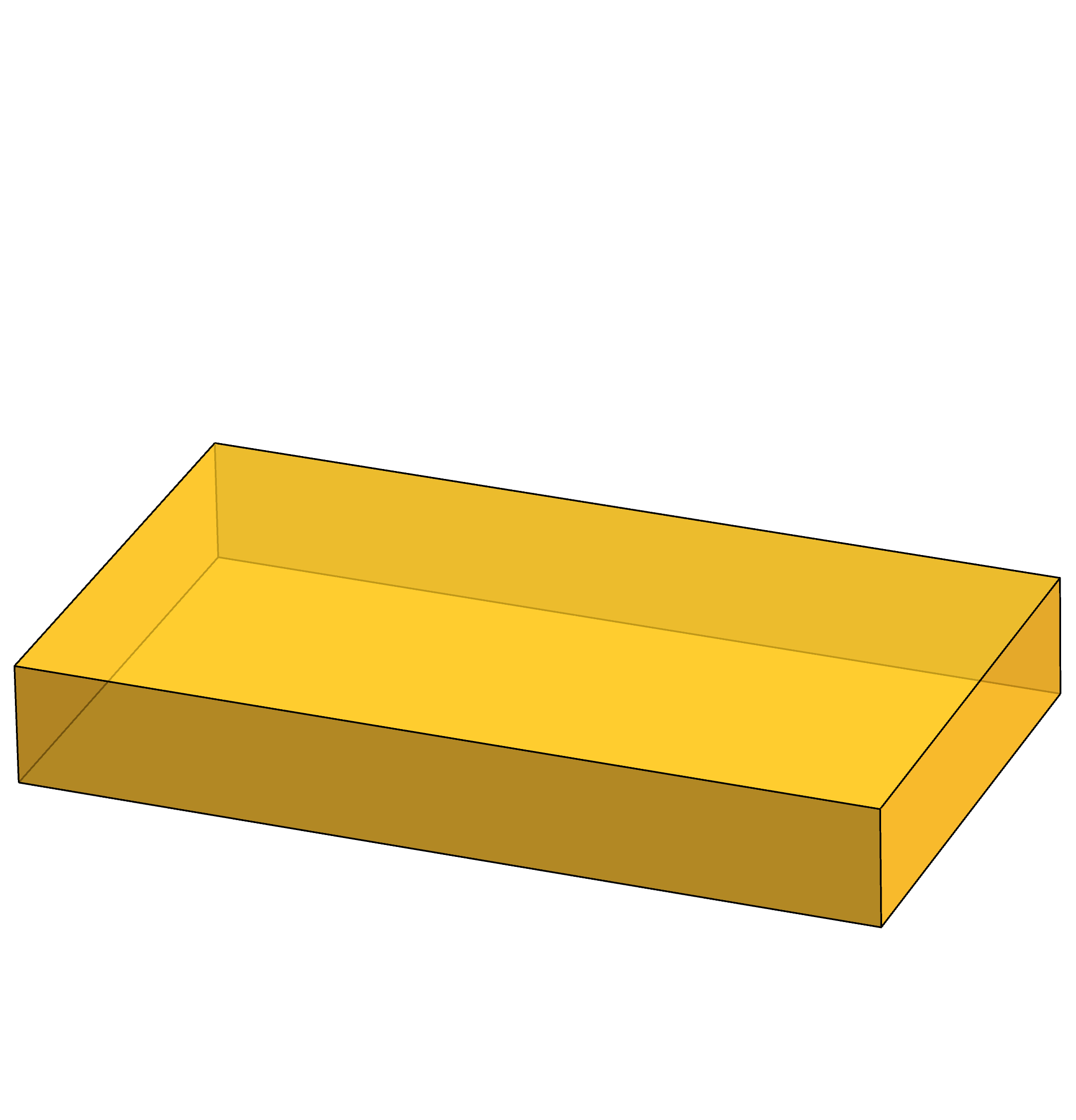}
        \caption{The polytopal bread, ham, and cheese, from \Cref{ex:ham-sandwich_3d}.}
        \label{fig:ham-sandwich_polys}
    \end{figure}
    Then, there exist exactly three hyperplanes that split each ingredient of the sandwich into two equal parts. 
    One of these cuts, in \Cref{fig:intro}, left, is the unique feasible point in the intersection $V_\frac{1}{2}({\rm bread}) \cap V_\frac{1}{2}({\rm ham}) \cap V_\frac{1}{2}({\rm cheese})$ restricted to the chamber
    {\small
    \begin{gather*}
    5u_2 - u_3 - t \le 0,\; 8u_1 - 3u_3 - t \le 0,\; -3u_3 + t \le 0, \\
    u_3 - t \le 0,\; -8u_1 + u_3 + t \le 0,\; -5u_2 - u_3 + t \le 0,
    \end{gather*}
    }
    where $V_\frac{1}{2}({\rm bread})$ is the set of points $(u,t)\in \Sph^2\times \R$ such that
    {\small
    \begin{gather*}
    \begin{multlined}
        4 t^3 u_1+t^2 (-60 u_1 u_2-12 u_1 u_3+15 u_2 u_3)+t \left(480 u_1^2 u_2-120 u_1 u_2 u_3+12 u_1 u_3^2\right) 
        -1280 u_1^3 u_2
        \\+1440 u_1^2 u_2 u_3-600 u_1 u_2^2 u_3-60 u_1 u_2 u_3^2-4 u_1 u_3^3+5 u_2 u_3^3 = 20 (3 u_1 u_2 u_3 (8 u_1-5 u_2)),
    \end{multlined}
    \end{gather*}}%
    and $V_\frac{1}{2}({\rm ham})$ and $V_\frac{1}{2}({\rm cheese})$ have similar expressions.
    The hyperplane corresponding to such a feasible point is then, up to four-digit precision, 
    \[
    0.5144 x + 0.2474 y + 0.8210 z = 1.8848.
    \]
    The associated slice of the sandwich is shown in \Cref{fig:ham-sandwich_slices}, left. The central and right slices correspond to the central and right cuts in \Cref{fig:intro}, and they are given respectively by the hyperplanes
    \begin{gather*}
        0.6043 x -0.3527 y + 0.7144 z = 1.0524, \\
        0.5127 x -0.7863 y + 0.3444 z = 0.0498.
        \qedhere
    \end{gather*}
    \begin{figure}[ht]
        \centering
        \includegraphics[height=4.5cm]{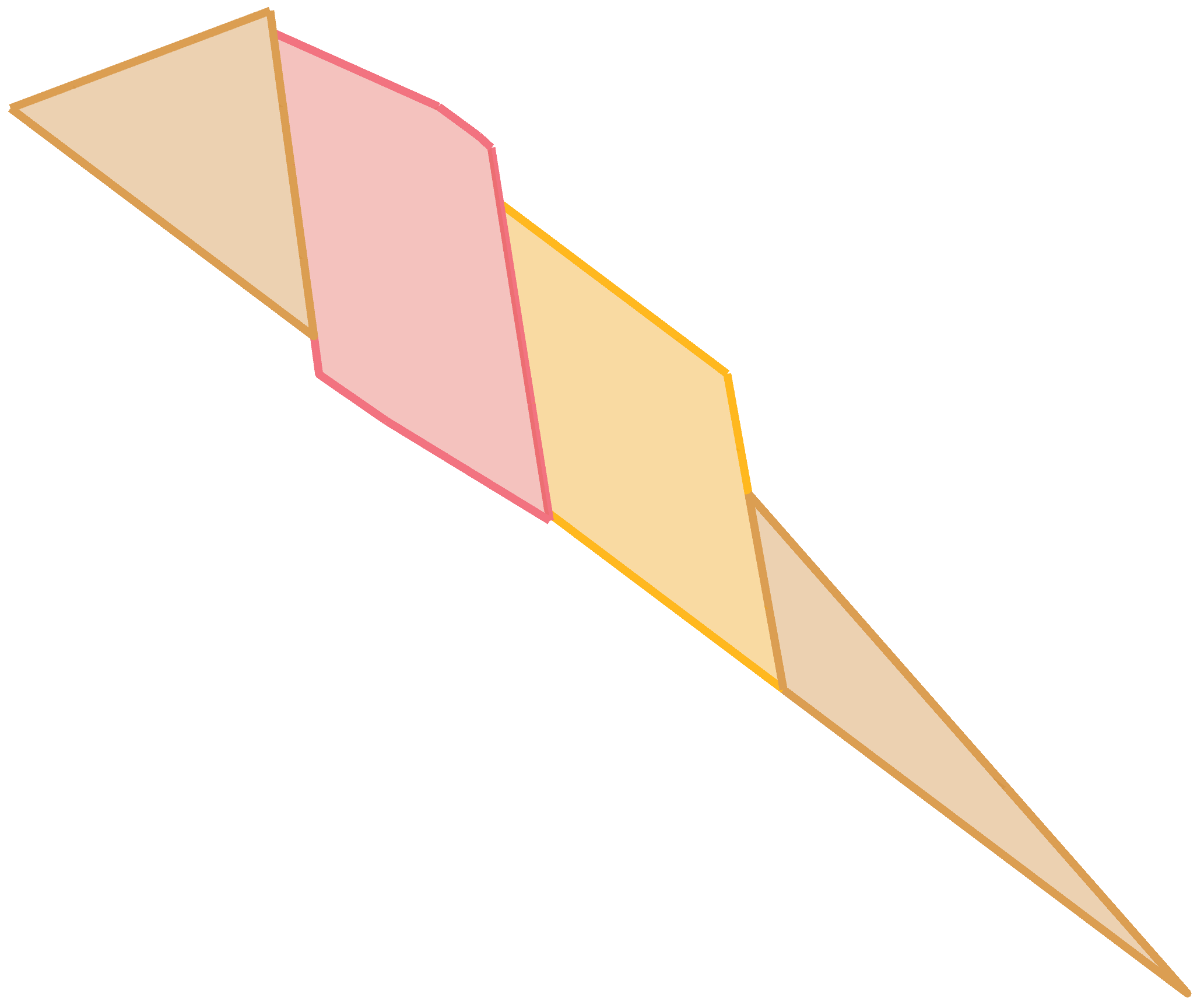}
        \hspace{-0.2cm}
        \includegraphics[height=4.5cm]{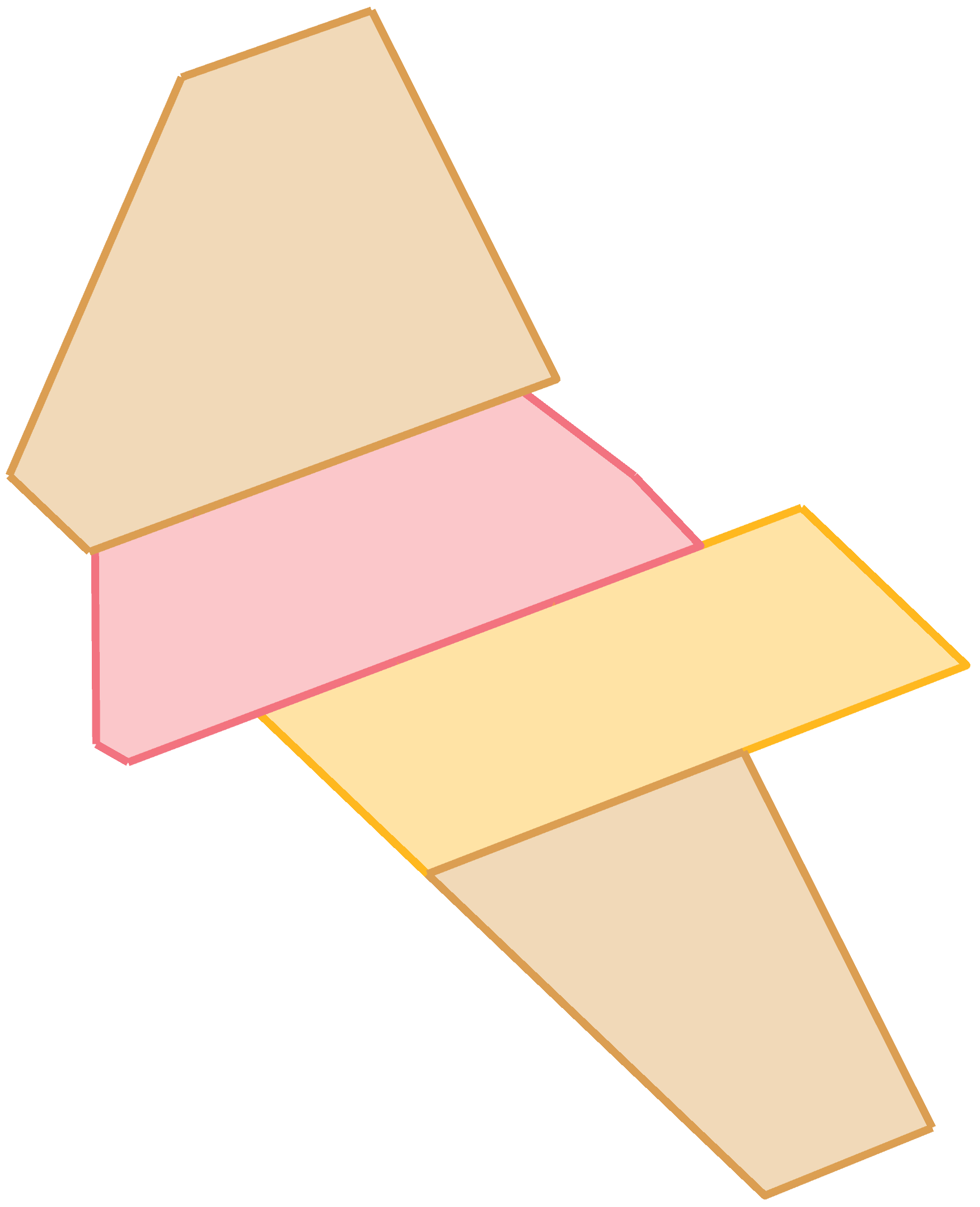}
        \quad
        \includegraphics[height=4.5cm]{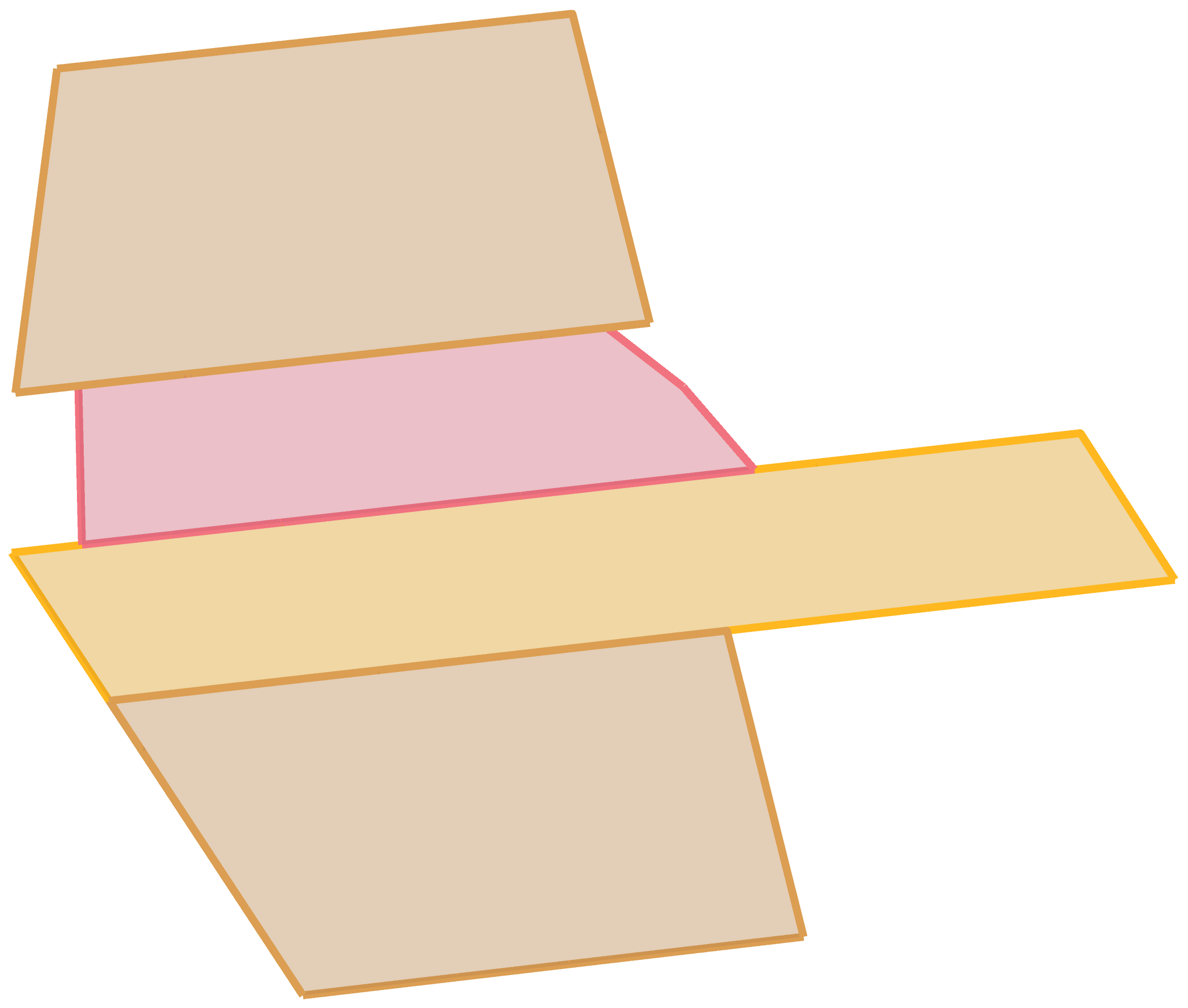}
        \caption{The three cuts of the ham-cheese-sandwich from \Cref{ex:ham-sandwich_3d}.}
        \label{fig:ham-sandwich_slices}
    \end{figure}
\end{example}

There is a straightforward generalization of \Cref{thm:HS_1/2_intro,thm:generalized_HS_intro} to measures with polynomial densities, as follows. 
Let $\mu=(\mu_1,\ldots,\mu_k)$ be a $k$-tuple of finite measures on $\R^d$
with densities $p_1,\ldots,p_k$, respectively. Assume that, for each $i\in[k]$, $\mu_i$ is supported on a finite union
$\bigcup_{j=1}^{m_i} Q_{i,j}$ of polytopes such that
\begin{equation}\label{eq:poly_densities}
p_i\big|_{Q_{i,j}} = p_{i,j} \in \mathbb Q[x_1,\ldots,x_d]
\end{equation}
for all $j\in[m_i]$, where $p_{i,j}\ge 0$ on $Q_{i,j}$.
Then, the proof of \Cref{thm:generalized_HS_intro} combined with \cite[Theorem 1.1 $(iii)$]{BDC2025bestslices} gives the following complexity statement. The corresponding algorithm follows the steps of \Cref{alg:alpha-HS}, with the cap-volume function replaced by the exact parametric integration of the polynomial density over the corresponding polytopal cap. See \cite{BDC2025bestslices} for further details.
\begin{corollary}
    Let $\mu=(\mu_1,\ldots,\mu_k)$ be a $k$-tuple of finite measures on $\R^d$
    with densities $p_i$ satisfying \eqref{eq:poly_densities}. Let $\alpha_1,\dots,\alpha_k \in (0,1) \cap \Q$. There exists an algorithm that takes as input the vertices of the polytopes $Q_{i,j}$ and the polynomials $p_{i,j}$ and outputs the set of all $\alpha$-cuts of $\mu$. For fixed dimension $d$, the complexity of the algorithm is polynomial in the input size, including the degrees and coefficient bit-size of the input polynomials.
\end{corollary}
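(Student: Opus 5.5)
The plan is to reduce to the proof of \Cref{thm:generalized_HS_intro}, replacing the cap-volume function by the parametric integral $\mu_i(H(u\le t))=\sum_j\int_{Q_{i,j}(u\le t)}p_{i,j}(x)\,dx$.

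First, by \Cref{rmk:disjoint-interior} we refine each family $\{Q_{i,j}\}_j$ into polytopes with pairwise disjoint interiors in polynomial time. Each refined cell $R$ lies inside some $Q_{i,j}$, and we assign it the density $p_{i,j}$. If several $Q_{i,j}$ overlap on $R$, the density is determined consistently by \eqref{eq:poly_densities}, since $p_i$ is a single function. Next we form the common slicing arrangement $\mathcal H_{\mathcal P}=\{(v,-1)^\perp\}$ over all vertices $v$ of all refined cells; as before, it has $O(n^d)$ chambers.

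On a fixed chamber $\Omega$, each cap $R(u\le t)$ has constant combinatorial type, and its new vertices are the rational points \eqref{eq:verts_cap}. We triangulate the cap with a fixed pulling scheme. For each simplex, the integral of a polynomial $g$ is obtained by pulling back along the affine parametrization by the vertices: it equals the determinant times a polynomial in the vertex coordinates, explicitly a sum of monomial integrals over the standard simplex, which are rational numbers. This is exactly the content of \cite[Theorem 1.1 $(iii)$]{BDC2025bestslices}, which we invoke. It follows that $\mu_i(H(u\le t))|_\Omega=p_{i,\Omega}/q_{i,\Omega}$. The denominator is a power of $\prod\langle u,w-v\rangle$ over the crossed edges. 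The degree is bounded polynomially in $n$ and $\deg p_{i,j}$, roughly $O(e_i(\deg p_i+d))$, and the coefficient bit-sizes are polynomial in the input bit-size.

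We then encode $\mu_i(H(u\le t))=\alpha_i\mu_i(\R^d)$. The total mass $\mu_i(\R^d)$ is itself a rational number computable by the same triangulation formula. On each $\Omega$, this equation becomes a polynomial equation after clearing the nonvanishing denominator. Together with the linear chamber inequalities and $\|u\|^2=1$, this gives a quantifier-free formula in $d+1$ variables with polynomially many polynomials of polynomial degree. The standard $(sD)^{O(d)}$ algorithms, cited in the proof of \Cref{thm:generalized_HS_intro}, then decide emptiness, compute the dimension, and produce sample points in polynomial time for fixed $d$.

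The main obstacle is the degree and bit-size accounting for the density integrals. We need to show that clearing denominators keeps the degree polynomial in both $n$ and $\deg p_{i,j}$, and does not multiply across simplices. The approach is to use a common denominator $q_\Omega^{\deg p+d}$ per chamber, exploiting multilinearity as in \Cref{lem:cap-volume-semialg}. We then verify that the number of monomials stays polynomial for fixed $d$.
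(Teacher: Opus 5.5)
Your proposal is correct and follows the same route as the paper. The paper runs \Cref{alg:alpha-HS} with $\capvol_{P_i}$ replaced by the parametric integral of the density over the polytopal cap, and invokes \cite[Theorem 1.1 $(iii)$]{BDC2025bestslices} for the piecewise rational structure and its size bounds. Your explicit common-denominator and degree accounting just spells out what the paper delegates to that citation; the exponent $\deg p_{i,j}+1$ already suffices there, since the pulled-back integrand has degree at most $\deg p_{i,j}$ in each vertex and the determinant is multilinear.
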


\section{A semialgebraic interpolation from ham-sandwich cuts to centerpoints}\label{sec:transversal}

In the previous section, we studied cuts by halfspaces whose boundaries are $(d-1)$-dimensional hyperplanes. We now extend this point of view to lower-dimensional affine subspaces. Given a flat, i.e., an affine subspace $F\subset\mathbb R^d$ of any dimension, we measure its depth, i.e., the proportion of a set that must be contained in every halfspace containing $F$. If $F$ is a point, then this recovers the Tukey halfspace depth from the introduction. If $F$ is a hyperplane, then its depth records the smaller of the two cap volumes determined by $F$. In particular, $F$ is a ham-sandwich cut precisely when this depth is one half of the total volume for each measure. This framework allows an interpolation between the centerpoint theorem and the ham-sandwich theorem, which is provided from a theoretical point of view by the \emph{center transversal theorem}.

In this section, we develop a semialgebraic and computational framework for this interpolation. We first study spaces of affine subspaces satisfying prescribed depth inequalities, and show that these spaces are semialgebraic and effectively computable for polytopal measures. We then specialize to centerpoints of polytopes and relate the resulting semialgebraic sets to convex floating bodies, a classical construction in convex geometry.

\subsection{Semialgebraic spaces of deep flats and centerpoints}
The center transversal theorem gives a common generalization of the ham-sandwich theorem and the centerpoint theorem
\cite{Dolnikov92:GeneralizationHSCenterpoint,ZivVre90:ExtensionHS}; for a
modern reference and computational aspects, see \cite[Section~1.4]{Matousek02:LecturesDiscreteGeom} and \cite{SchniderBarequetWang19}. In this section, we extend our framework from the space of ham-sandwich cuts to the space of solutions to the center transversal theorem. As a special case, we obtain a description of the space of solutions to the centerpoint theorem.

\begin{theorem}[Center transversal theorem]\label{thm:center_transversal}
Let $1 \leq k \leq d$ and let $\mu_1,\ldots,\mu_k$ be finite Borel measures on $\mathbb R^d$ that vanish on hyperplanes. Then there exists an affine subspace $F\subset\mathbb R^d$ of dimension $k-1$ such that for every closed halfspace $H\subset\mathbb R^d$ with $F\subseteq H$ one has
\[
\mu_i(H)\ge \frac{1}{d-k+2}\,\mu_i(\mathbb R^d)
\qquad \text{for all } i=1,\ldots,k.
\]
\end{theorem}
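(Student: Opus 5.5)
The plan is to follow the classical topological proof of Dol'nikov and Živaljević--Vrećica. The idea is to reduce to a Borsuk--Ulam-type statement for sections of a vector bundle over a Grassmannian. The centerpoint theorem gives a lower-dimensional deep set after projection, and a nonvanishing-section argument forces all $k$ measures to share one flat.

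For every linear subspace $L\in\gr(d-k+1,d)$, consider the orthogonal projection $\pi_L\colon\R^d\to L$ and the pushforward measures $(\pi_L)_*\mu_i$ on $L\cong\R^{d-k+1}$. A closed halfspace $H\supseteq F$, where $F=\pi_L^{-1}(x)$ is a $(k-1)$-flat orthogonal to $L$, is exactly a preimage $\pi_L^{-1}(H')$ of a closed halfspace $H'\ni x$ in $L$. Thus $F$ has the required depth for $\mu_i$ if and only if $x$ is a point of depth at least $\mu_i(\R^d)/(d-k+2)$ for $(\pi_L)_*\mu_i$ in $L$. By the centerpoint theorem in dimension $d-k+1$, the set $C_i(L)\subset L$ of such points is nonempty, and it is convex and compact as an intersection of closed halfspaces. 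The theorem therefore reduces to finding one $L$ with $\bigcap_{i=1}^k C_i(L)\neq\emptyset$.

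To find such an $L$, fix $i=k$ and pick a continuous selection $c_i(L)\in C_i(L)$ for each $i$. The selection can be produced, for example, by taking a barycenter of $C_i(L)$ after a small smoothing of the measures. Vanishing on hyperplanes makes $L\mapsto C_i(L)$ continuous in the Hausdorff metric, and one then lets the smoothing tend to zero and uses compactness. This gives $k-1$ sections $s_i(L)=c_i(L)-c_k(L)$ of the tautological bundle $\gamma$ over $\gr(d-k+1,d)$, whose fiber over $L$ is $L$ itself. Their Whitney sum is a section of $\gamma^{\oplus(k-1)}$, a bundle of rank $(k-1)(d-k+1)$, equal to $\dim\gr(d-k+1,d)$. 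Its top Stiefel--Whitney class $w_{d-k+1}(\gamma)^{k-1}$ is nonzero in $H^*(\gr(d-k+1,d);\mathbb F_2)$; this is the standard computation via the Schubert class of a point. Hence every section has a zero, which gives an $L$ with $c_1(L)=\dots=c_k(L)$. The flat $F=\pi_L^{-1}(c_1(L))$ then has dimension $k-1$ and satisfies the inequality for all $i$.

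I expect the main obstacle to be the continuity of the selections. The sets $C_i(L)$ may be lower-dimensional, and their barycenters can jump. I would handle this by an approximation argument: replace each $\mu_i$ by $\mu_i$ convolved with a Gaussian and shrink the depth threshold by $\varepsilon$, so that $C_i^\varepsilon(L)$ has nonempty interior varying continuously in $L$. I then apply the bundle argument and pass to the limit $\varepsilon\to0$ using compactness of $\gr(d-k+1,d)$ and closedness of the depth condition. The cohomological fact is classical and would be cited rather than reproved, for instance from the paper's reference \cite{ZivVre90:ExtensionHS}.
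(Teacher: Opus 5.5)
The paper does not prove this theorem; it quotes it from Dol'nikov and \v{Z}ivaljevi\'c--Vre\'cica \cite{Dolnikov92:GeneralizationHSCenterpoint,ZivVre90:ExtensionHS}, and your argument is essentially the classical proof from those sources and is correct: project to $L\in\gr(d-k+1,d)$, apply the centerpoint theorem in $L$, and force a common zero of the $k-1$ difference sections of $\gamma^{\oplus(k-1)}$ using $w_{d-k+1}(\gamma)^{k-1}\neq 0$. One slip: vanishing on hyperplanes does not in general make $L\mapsto C_i(L)$ Hausdorff-continuous (it only gives upper semicontinuity), so your smoothing-plus-$\varepsilon$ step is genuinely needed, and the nonempty interior alone is not what makes it work; the reason is that for an everywhere positive density and a threshold $c$ strictly below the maximal depth, every point of depth $\ge c$ is a limit of points of depth $>c$ (points on the open segment toward a deepest point), which supplies the missing lower semicontinuity.
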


For $k=d$, the subspace $F$ is a hyperplane. Since the measures vanish on hyperplanes, the two closed halfspaces bounded by $F$ both have at least one half of the total measure and hence \emph{exactly} one half. Thus, \Cref{thm:center_transversal} recovers the ham-sandwich theorem. For $k=1$, the subspace $F$ is a point, and \Cref{thm:center_transversal} recovers the centerpoint theorem.

In the case of polytopal measures, each $\mu_i$ is the restriction of the Lebesgue measure to a polytopal set $P_i$. Hence, the conclusion of the center transversal theorem can be written in terms of cap volumes as follows: there exists a $(k-1)$-dimensional affine subspace $F\subset\mathbb R^d$ such that, for every $(u,t)\in \Sph^{d-1}\times\mathbb R$ with $F\subseteq H(u\le t)$, one has
\[
\capvol_{P_i}(u,t)\ge
\frac{1}{d-k+2}\,\vol_d(P_i)
\qquad \text{for all } i=1,\ldots,k.
\]

\begin{definition}\label{def:flat-depth}
Let $S\subset\mathbb R^d$ be a full-dimensional compact set and let $F\subset\mathbb R^d$ be an affine subspace. The \emph{depth} of $F$ with respect to $S$ is
\[
\depth_S(F) = \inf \left\{ \capvol_S(u,t) \;\middle|\; (u,t)\in \Sph^{d-1}\times\mathbb R,\; F\subseteq H(u\le t) \right\}.
\]
\end{definition}

We now show that, for polytopal measures, the set of affine subspaces satisfying such depth inequalities is semialgebraic. Let $\graff(r,d)$ denote the affine Grassmannian, i.e., the space of all $r$-dimensional affine subspaces of $\mathbb R^d$.

\begin{theorem}[Semialgebraicity of the space of $\alpha$-deep flats]
\label{thm:deep-flats-semi-alg}
Let $P=Q_1\cup\cdots\cup Q_m\subset\mathbb R^d$ be a union of finitely many full-dimensional polytopes, and let $\alpha\in(0,1)$. For $1\le \ell\le d$, the set
\[
\mathcal F_\alpha^\ell(P) = \left\{ F\in\graff(\ell-1,d) \;\middle|\; \depth_P(F)\ge \alpha\,\vol_d(P) \right\}
\]
is semialgebraic.
\end{theorem}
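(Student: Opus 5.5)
We will prove the statement by writing $\mathcal F_\alpha^\ell(P)$ as a first-order formula over the reals and then applying quantifier elimination, with \Cref{lem:cap-volume-semialg} supplying the semialgebraic input.

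\textbf{Parametrization.} Since $\graff(\ell-1,d)$ is not a subset of a Euclidean space, we first fix a semialgebraic model of it. One concrete choice is to represent an $(\ell-1)$-flat $F$ by a pair $(\Pi,p)$. Here $\Pi\in\R^{d\times d}$ is the orthogonal projection onto the linear direction space of $F$, characterized by
\[
\Pi=\Pi^{\top},\qquad \Pi^2=\Pi,\qquad \operatorname{tr}\Pi=\ell-1 .
\]
The vector $p\in\R^d$ is the foot point of $F$, the point of $F$ closest to the origin, characterized by $\Pi p=0$. This gives a bijection between $\graff(\ell-1,d)$ and an algebraic subset $G\subset\R^{d\times d}\times\R^d$. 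Our task is therefore to show that the corresponding subset of $G$ is semialgebraic.

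\textbf{Containment condition.} For $(u,t)\in\Sph^{d-1}\times\R$, the flat $F=p+\operatorname{im}\Pi$ lies in $H(u\le t)$ if and only if $\langle u,p\rangle\le t$ and $\Pi u=0$. Indeed, a linear functional is bounded above on an affine subspace only if it vanishes on the direction space. Both conditions are polynomial in $(\Pi,p,u,t)$.

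\textbf{The formula.} The membership $F\in\mathcal F_\alpha^\ell(P)$ becomes
\[
(\Pi,p)\in G\ \wedge\ \forall u\,\forall t\ \Big[\big(\|u\|^2=1\wedge \Pi u=0\wedge \langle u,p\rangle\le t\big)\Rightarrow \capvol_P(u,t)\ge \alpha\vol_d(P)\Big].
\]
Here the infimum in \Cref{def:flat-depth} has been replaced by a universal statement, which is legitimate because "$\inf\ge c$" is equivalent to "every element is $\ge c$".

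The only non-polynomial ingredient is $\capvol_P$. By \Cref{lem:cap-volume-semialg}, its graph is semialgebraic. On each closed chamber $\overline\Omega$ (a cell of $\mathcal H_P$ intersected with $\Sph^{d-1}\times\R$, cut out by linear inequalities) it agrees with $p_\Omega/q_\Omega$, where $q_\Omega\neq0$ on $\Omega$. Continuity extends the formula to the boundary; alternatively, we restrict to open chambers and appeal to continuity of $\capvol_P$, which is harmless since the condition $\ge$ is closed. We then replace the implication's conclusion by a finite conjunction over chambers:
\[
(u,t)\in\Omega\ \Rightarrow\ p_\Omega(u,t)-\alpha\vol_d(P)\,q_\Omega(u,t)\ \text{has the sign of } q_\Omega \text{ on }\Omega .
\]
Equivalently, we can use the single condition $p_\Omega q_\Omega-\alpha\vol_d(P)q_\Omega^2\ge0$. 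The result is a first-order formula in the language of ordered fields, and by quantifier elimination (\cite[Theorem~2.77]{BasuPollackRoy2006}) the set it defines in $G$ is semialgebraic.

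\textbf{Main obstacle.} The step needing the most care is the treatment of $\capvol_P$ on lower-dimensional cells and on the boundaries between chambers, that is, making sure the piecewise formula is a correct semialgebraic definition of $\capvol_P$ everywhere. I would handle this as follows. The quantified condition holds for all $(u,t)$ if and only if it holds on the dense union of open chambers, because the set of admissible $(u,t)$ for fixed $F$ is closed and $\capvol_P$ is continuous.

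One subtlety in that density argument is the case where the admissible set $\{u\in\Sph^{d-1}:\Pi u=0\}\times\{t\ge\langle u,p\rangle\}$ lies entirely in a union of walls of $\mathcal H_P$. To avoid it, I would simply use the semialgebraicity of the graph of $\capvol_P$ directly: introduce a variable $c$ and require
\[
\forall u,t,c:\ \big(\text{admissible}\wedge c=\capvol_P(u,t)\big)\Rightarrow c\ge\alpha\vol_d(P).
\]
This sidesteps the density argument entirely. Finally, the identification $G\cong\graff(\ell-1,d)$ transports semialgebraicity to the affine Grassmannian.
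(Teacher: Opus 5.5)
Your proof is correct and follows essentially the same route as the paper. Both arguments encode $\depth_P(F)\ge\alpha\vol_d(P)$ as a universally quantified formula, using the containment criterion ($u$ orthogonal to the direction space, $\langle u,p\rangle\le t$) and the semialgebraicity of $\capvol_P$ from \Cref{lem:cap-volume-semialg}, and then apply quantifier elimination. The remaining differences are cosmetic: you fix an explicit projection-matrix model of $\graff(\ell-1,d)$ and keep $t$ as a quantified variable, whereas the paper uses monotonicity in $t$ to restrict to $t=\langle u,p\rangle$; your device of quantifying over the graph of $\capvol_P$ makes either version rigorous without any chamber-boundary case analysis.
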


\begin{proof}
We use the standard semialgebraic model of the affine Grassmannian. Namely, we write an affine $(\ell-1)$-dimensional subspace $F\subset\mathbb R^d$ as
\[
F=p+L,
\]
where $L\in\gr(\ell-1,d)$ is a linear subspace and $p\in L^\perp$. With this representation,
\[
\graff(\ell-1,d) = \{(L,p)\mid L\in\gr(\ell-1,d),\ p\in L^\perp\}
\]
is semialgebraic.
We first characterize the halfspaces containing $F = p + L$. For $u\in \Sph^{d-1}$, one has
\[
F\subseteq H(u\le t) \quad\Longleftrightarrow\quad u\in L^\perp \text{ and } \langle u,p\rangle\le t.
\]
Indeed, if $u\notin L^\perp$, then the map $x\mapsto\langle u,x\rangle$ is not constant in the affine space $p+L$, and therefore is unbounded on it. Hence, $F$ cannot be contained in any halfspace of the form $H(u\le t)$ with $u \not \in L^\perp$. If $u\in L^\perp$, then $\langle u,p+l\rangle=\langle u,p\rangle$ for all $l\in L$, so $F\subseteq H(u\le t)$ if and only if $\langle u,p\rangle\le t$.
Since $t\mapsto\capvol_P(u,t)$ does not decrease for fixed $u\in L^\perp$, the infimum over all admissible $t$ is attained at $t=\langle u,p\rangle$. Therefore, $\depth_P(p+L)\ge \alpha\,\vol_d(P)$ is equivalent to
\[
\capvol_P(u,\langle u,p\rangle)\ge \alpha\,\vol_d(P) \qquad \text{for all } u\in \Sph^{d-1}\cap L^\perp.
\]
Thus
\[
\mathcal F_\alpha^\ell(P) = \left\{ (p,L)\in\graff(\ell-1,d) \;\middle|\; \forall u\in \Sph^{d-1}\cap L^\perp,\  \capvol_P(u,\langle u,p\rangle)\ge \alpha\,\vol_d(P) \right\}.
\]
By \Cref{lem:cap-volume-semialg}, the cap-volume function $\capvol_P$ is semialgebraic. The conditions $u\in \Sph^{d-1}$ and $u\in L^\perp$ are given by polynomial equations, and the map $(p,u)\mapsto\langle u,p\rangle$ is polynomial.
Hence the relation
\[
\capvol_P(u,\langle u,p\rangle)\ge \alpha\,\vol_d(P)
\]
is semialgebraic in $(p,L,u)$. By quantifier elimination, $\mathcal F_\alpha^\ell(P)$ is semialgebraic as well.
\end{proof}

As a corollary of \Cref{thm:deep-flats-semi-alg}, we get that the set of simultaneous $\alpha$-deep flats of finitely many polytopal measures is semialgebraic.
\begin{corollary}\label{cor:simultaneous-deep-flats-semi-alg}
Let $\mathcal P=(P_1,\ldots,P_k)$ be a $k$-tuple of polytopal measures in $\mathbb R^d$, and let $\alpha=(\alpha_1,\ldots,\alpha_k)\in(0,1)^k$. Then, for every $\ell \in [d]$,
\[
\mathcal F_\alpha^\ell(\mathcal P) = \bigcap_{i=1}^k \mathcal F_{\alpha_i}^\ell(P_i) \subseteq \graff(\ell-1,d)
\]
is semialgebraic.
\end{corollary}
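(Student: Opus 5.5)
This follows from \Cref{thm:deep-flats-semi-alg} applied to each $P_i$, together with closure of semialgebraic sets under finite intersections. The only point needing care is that the sets $\mathcal F_{\alpha_i}^\ell(P_i)$ must live in one common semialgebraic model of $\graff(\ell-1,d)$, so that their intersection is taken inside a single ambient space.

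First, fix the model used in the proof of \Cref{thm:deep-flats-semi-alg}. A flat $F=p+L$ with $p\in L^\perp$ is encoded by the orthogonal projection matrix $\pi_L$ onto $L$ together with $p$. The conditions $\pi_L^2=\pi_L$, $\pi_L^\top=\pi_L$, $\operatorname{tr}\pi_L=\ell-1$ and $\pi_L p=0$ cut out $\graff(\ell-1,d)$ as a real algebraic subset of $\R^{d\times d}\times\R^d$. Since the parametrization $F\mapsto(\pi_L,p)$ is bijective, it depends only on $F$ and not on the measure. Each $P_i$ is a polytopal measure, that is, a finite union of full-dimensional polytopes, so \Cref{thm:deep-flats-semi-alg} applies to $P_i$ with the parameter $\alpha_i\in(0,1)$. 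It shows that $\mathcal F_{\alpha_i}^\ell(P_i)$ is a semialgebraic subset of this common model.

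Next, intersect the $k$ semialgebraic subsets of the same ambient space. By the closure properties recalled in \Cref{sec:semialg}, a finite intersection of semialgebraic sets is semialgebraic, so $\bigcap_{i=1}^k\mathcal F_{\alpha_i}^\ell(P_i)$ is semialgebraic.

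Alternatively, one can argue without citing the theorem $k$ times. Write $\mathcal F_\alpha^\ell(\mathcal P)$ as the set of $(\pi_L,p)$ such that, for all $u\in\Sph^{d-1}$ with $\pi_L u=0$, the conjunction over $i$ of the conditions $\capvol_{P_i}(u,\langle u,p\rangle)\ge\alpha_i\vol_d(P_i)$ holds. Each conjunct is semialgebraic by \Cref{lem:cap-volume-semialg}, and a single universal quantifier is then eliminated.

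The expected obstacle is mild: one must make sure that the identification of $\graff(\ell-1,d)$ is the same for every $P_i$. The projection-matrix model above guarantees this.
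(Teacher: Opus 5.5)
Your proof is correct and matches the paper's implicit argument: apply \Cref{thm:deep-flats-semi-alg} to each $P_i$ with parameter $\alpha_i$, then use that a finite intersection of semialgebraic sets is semialgebraic. Fixing one common model of $\graff(\ell-1,d)$, as you do with projection matrices, makes explicit a point the paper leaves tacit, and your alternative argument with a single universal quantifier is also valid.
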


The preceding corollary is effective: using the explicit semialgebraic description of the cap-volume functions from \Cref{lem:cap-volume-semialg}, one can compute defining equations and inequalities for these sets. Next, we make these computational aspects more explicit.

\begin{proposition}\label{prop:deep-flats-algorithm}
Let $\mathcal P=(P_1,\ldots,P_k)$ be a $k$-tuple of rational polytopal measures in $\mathbb R^d$, and let $\alpha=(\alpha_1,\ldots,\alpha_k)\in(0,1)^k \cap \Q^k$.
For fixed $d$ and $\ell$, the set $\mathcal F_\alpha^\ell(\mathcal P)$ can be computed in polynomial time.
\end{proposition}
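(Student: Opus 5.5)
We make the proof of \Cref{thm:deep-flats-semi-alg} effective and then invoke a quantifier elimination algorithm with fixed numbers of variables. First we fix an explicit polynomial model of $\graff(\ell-1,d)$ whose number of variables depends only on $d$. We represent $L$ by its orthogonal projection matrix $\Pi\in\R^{d\times d}$, subject to $\Pi^2=\Pi$, $\Pi^\top=\Pi$ and $\operatorname{tr}\Pi=\ell-1$, and we require $\Pi p=0$. This gives $O(d^2)$ variables and constant-degree constraints. The condition $u\in L^\perp$ becomes $\Pi u=0$.

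Next we refine the input as in \Cref{rmk:disjoint-interior,rmk:unions}, which takes polynomial time in fixed dimension. We then build the common slicing arrangement $\mathcal H_{\mathcal P}$ and, on each of its $O(n^d)$ cells $\Omega$, the rational expressions $p_{i,\Omega}/q_{i,\Omega}$ from \Cref{lem:cap-volume-semialg}, exactly as in the proof of \Cref{thm:generalized_HS_intro}. On $\Omega$ the sign of $q_{i,\Omega}$ is constant and computable, so the inequality $\capvol_{P_i}(u,t)\ge\alpha_i\vol_d(P_i)$ becomes a single polynomial inequality of degree $O(n^2)$. Closed cells cover $\Sph^{d-1}\times\R$, and $\capvol$ is continuous, so the specialization remark in the proof of \Cref{lem:cap-volume-semialg} lets us use closed chambers and ignore lower-dimensional cells. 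Hence the predicate
\[
\Phi(u,t):\ \bigwedge_{i=1}^k\ \bigvee_{\Omega}\Big( (u,t)\in\overline\Omega \wedge \operatorname{sign}(q_{i,\Omega})\big(p_{i,\Omega}(u,t)-\alpha_i\vol_d(P_i)\,q_{i,\Omega}(u)\big)\ge 0\Big)
\]
is a quantifier-free formula with $s=O(kn^{d+1})$ polynomials of degree $D=O(n^2)$. The volumes $\vol_d(P_i)$ are rationals computable in polynomial time by triangulation. Substituting $t=\langle u,p\rangle$ preserves degree up to a factor two.

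The set $\mathcal F^\ell_\alpha(\mathcal P)$ is then described by the formula
\[
\forall u\in\R^d:\ \big(\|u\|^2=1\wedge \Pi u=0\big)\Rightarrow \Phi(u,\langle u,p\rangle),
\]
together with the Grassmannian constraints on $(\Pi,p)$. This formula has one block of $d$ universally quantified variables and $O(d^2)$ free variables. We apply a single-block quantifier elimination algorithm, for instance \cite[Algorithm 14.5]{BasuPollackRoy2006}. Its complexity is $s^{O(d^2)\cdot O(d)}D^{O(d^3)}$ arithmetic operations, and since $d$ is fixed this is polynomial in $n$ and $k$. The bit-sizes of the coefficients are polynomial as well, because all input data are rational with polynomial bit-size.

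The main obstacle is this step. A naive formulation through cylindrical algebraic decomposition would still be polynomial for fixed $d$, since the number of variables is bounded in terms of $d$ alone. The real issue is to ensure that the number of polynomials and their degrees remain polynomial. This holds because the number of chambers is $O(n^d)$ and each degree is $O(n^2)$, while the exponent depends only on $d$ and $\ell$. With these bounds, the output is a quantifier-free description of size polynomial in the input, and the sampling and dimension routines from \Cref{sec:semialg} apply to it afterwards.
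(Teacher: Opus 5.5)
Your overall route is the paper's: make the proof of \Cref{thm:deep-flats-semi-alg} explicit via \Cref{lem:cap-volume-semialg}, then eliminate a block of quantified variables whose size is bounded in terms of $d$ and $\ell$. You supply far more detail than the paper's sketch, but one of those details fails as written. The problem is the encoding of $\capvol_{P_i}(u,t)\ge\alpha_i\vol_d(P_i)$ on the \emph{closed} chamber $\overline\Omega$ by the cleared inequality $\operatorname{sign}(q_{i,\Omega})\bigl(p_{i,\Omega}-\alpha_i\vol_d(P_i)\,q_{i,\Omega}\bigr)\ge 0$. Suppose $q_{i,\Omega}(u)=0$ at a point of $\overline\Omega$, that is, an edge crossed in $\Omega$ lies in $H(u=t)$. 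The bound $|p_{i,\Omega}|\le\vol_d(P_i)\,|q_{i,\Omega}|$ holds on $\Omega$, so it forces $p_{i,\Omega}=0$ at that point as well. The disjunct then reduces to $0\ge 0$ and holds whatever the cap volume is. For a concrete case, take the triangle of \Cref{ex:triangle} and the chamber where only $(-1,-1)$ lies below the line. There $p=(t+u_1+u_2)^2$ and $q=2u_1u_2$. The point $u=(0,1)$, $t=-1$ lies in $\overline\Omega$ and satisfies your disjunct, yet $\capvol_P(u,t)=0$. The specialization remark in \Cref{lem:cap-volume-semialg} only says that $\capvol$ extends $p/q$ continuously. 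It does not let you evaluate the cleared polynomial inequality where the denominator vanishes.

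Your $\Phi$ only errs by accepting too much. For $\ell=1$ this is harmless: for fixed $x$ the violating directions form an open subset of $\Sph^{d-1}$, while the spurious ones lie in finitely many great subspheres $(w-v)^\perp\cap\Sph^{d-1}$. For $\ell\ge 2$ it corrupts the output. If $F=p+L$ contains an edge $[v,w]$ of a subpolytope of $P_i$, then for every $u\in\Sph^{d-1}\cap L^\perp$ the hyperplane $H(u=\langle u,p\rangle)$ contains $[v,w]$. So $(u,\langle u,p\rangle)$ lies in the closure of a chamber in which $[v,w]$ is crossed, and $F$ passes the $i$-th test for every $u$. With $d=\ell=2$ and $\alpha=\tfrac12$, your formula accepts the line $y=-1$ through an edge of the triangle, whose depth is $0$. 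Restricting to open chambers is no remedy either, since for $\ell\ge2$ the whole quantified sphere can lie on the arrangement (e.g.\ when $F$ contains a vertex).

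The repair is to add $q_{i,\Omega}(u)\neq 0$ to each disjunct. Coverage is preserved: for any $(u,t)$, the chamber containing $(u,t+\varepsilon)$ for small $\varepsilon>0$ has $(u,t)$ in its closure. Since $q_{i,\Omega}$ depends only on $u$ and does not vanish on $\Omega$, that chamber satisfies $q_{i,\Omega}(u)\neq0$. At points of $\overline\Omega$ with $q_{i,\Omega}(u)\neq0$, continuity gives $p_{i,\Omega}/q_{i,\Omega}=\capvol_{P_i}$. This adds only $O(kn^d)$ polynomials of degree $O(n^2)$, after which your single-block elimination argument and complexity bounds go through.
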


\begin{proof}
By \Cref{lem:cap-volume-semialg}, the cap-volume functions appearing in the definition of $\mathcal F_\alpha^\ell(\mathcal P)$ have semialgebraic descriptions of polynomial size, when the dimension $d$ is fixed. The affine Grassmannian $\graff(\ell-1,d)$ has fixed dimension, and the defining formula for $\mathcal F_\alpha^\ell(\mathcal P)$ uses a fixed number of quantified variables when $d$ and $\ell$ are fixed.
Therefore, quantifier elimination and standard algorithms for semialgebraic sets compute a quantifier-free description, decide emptiness, and produce sample points in polynomial time in the input size.
\end{proof}

\begin{proof}[Proof of \Cref{thm:centerpoints_intro}]
    For $\ell=1$, the points $(p,L) \in \graff(0,d)$ with $L \in \gr(0,d)$, $p \in L^\perp$ (as in the proof of \Cref{thm:deep-flats-semi-alg}) necessarily satisfies that $L$ is the trivial linear space of dimension $0$, and $p \in \R^d$. We can thus identify $\graff(0,d)$ with $\R^d$.
    To obtain the centerpoints, we consider \Cref{prop:deep-flats-algorithm} for $k = 1$ and $\alpha = (\frac{1}{d+1})$. Then
    $
        \mathcal F_{\frac{1}{d+1}}^1(P)
    $
    is precisely the set of all centerpoints,
    and is non-empty by the centerpoint theorem.
    By \Cref{prop:deep-flats-algorithm}, this set can be computed in polynomial time.
    Algorithmically, one first computes the piecewise rational representation of $\capvol_P$ (see Steps 1--4 in \Cref{alg:alpha-HS}), applies quantifier elimination to obtain a quantifier-free semialgebraic description of $\mathcal F_{\frac{1}{d+1}}^1(P)$, and computes a sample point of this semialgebraic set.
\end{proof}

\subsection{Centerpoints of polytopes via floating bodies} \label{sec:floating-bodies}
We now focus on the second extremal case of the center transversal theorem, namely the computation of centerpoints. In this subsection we restrict our discussion to a \emph{single full-dimensional convex polytope} $P\subset\mathbb R^d$, and show that the set of centerpoints of $P$ is a floating body of $P$.
We use the term \emph{floating body} for what is often called the convex floating body, in order to distinguish it from Dupin's floating body; see \cite{BarLar88:Floating,SchWer90:Floating,MorWer19:FloatingPolytopes}. \Cref{fig:triangle-floating-body} shows the floating body of a triangle (computed with methods from this subsection).

\begin{figure}[hbt]
    \centering
    \begin{tikzpicture}[scale=1.2]

\definecolor{tri_color}{named}{NavyBlue}
\definecolor{float_color}{named}{OliveGreen}

\begin{scope}[scale=1]
    \clip (-1.2,-1.2) rectangle (2.2,2.2);

    \filldraw[
      fill=tri_color,
      fill opacity=0.2,
      draw=tri_color,
      very thick
    ]
    (-1,-1) -- (-1,2) -- (2,-1) -- cycle;

    \foreach \p in {(-1,-1), (-1,2), (2,-1)}{
      \fill[tri_color] \p circle (2pt);
    }

\pgfmathsetmacro{\a}{(1 - sqrt(3))/4}
\pgfmathsetmacro{\b}{(-1 + sqrt(3))/2}

\fill[float_color, opacity=0.25]
  plot[samples=120, smooth, domain=\a:\b, variable=\x]
    ({-\x}, {-(1 - 4*\x*\x)/(4*\x - 4)})
  --
  plot[samples=120, smooth, domain=\b:\a, variable=\y]
    ({-(1 - 4*\y*\y)/(4*\y - 4)}, {-\y})
  --
  plot[samples=120, smooth, domain=\b:\a, variable=\x]
    ({-\x}, {-(4*\x - 1)/(4*\x - 4)})
  -- cycle;

\draw[very thick, float_color, samples=120, smooth, domain=\a:\b, variable=\x]
  plot ({-\x}, {-(4*\x - 1)/(4*\x - 4)});

\draw[very thick, float_color, samples=120, smooth, domain=\a:\b, variable=\x]
  plot ({-\x}, {-(1 - 4*\x*\x)/(4*\x - 4)});

\draw[very thick, float_color, samples=120, smooth, domain=\a:\b, variable=\y]
  plot ({-(1 - 4*\y*\y)/(4*\y - 4)}, {-\y}); 
  
\end{scope}
\end{tikzpicture}
    \caption{The triangle $P$ and its floating body $P_{\frac{1}{3}}$, as computed in \Cref{ex:floating-body} using the method of this section.}
    \label{fig:triangle-floating-body}
\end{figure}

\begin{definition}
Let $K\subset\mathbb R^d$ be a convex body and let $\alpha\in(0,1)$. The \emph{floating body} $K_\alpha$ is
\[
K_\alpha = \bigcap_{\substack{u\in \Sph^{d-1},\, t\in\mathbb R\\ \capvol_K(u,t)\le \alpha\,\vol_d(K)}} H(u\ge t).
\]
\end{definition}
Equivalently, since $t\mapsto\capvol_K(u,t)$ is nondecreasing, and since $K$ is convex and full-dimensional, the intersection above may be taken over the boundary values where the cap-volume function is equal to $\alpha\,\vol_d(K)$.
Thus
\[
K_\alpha = \bigcap_{(u,t)\in V_\alpha(K)} H(u\ge t).
\]
Replacing $(u,t)$ by $(-u,-t)$ gives the equivalent form
\[
K_\alpha = \bigcap_{(u,t)\in V_{1-\alpha}(K)} H(u\le t).
\]
This last description connects floating bodies directly to the quantile loci studied in the previous section. We now use it to show that, for polytopes, floating bodies remain within the semialgebraic category.

\begin{proposition}\label{prop:semialg_floating}
Let $P\subset\mathbb R^d$ be a full-dimensional polytope and let $\alpha\in(0,1)$. Then the floating body $P_\alpha$ is semialgebraic.
\end{proposition}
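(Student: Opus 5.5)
The plan is to write $P_\alpha$ as a first-order formula over the reals and then invoke quantifier elimination. The starting point is the description of the floating body as an intersection over the quantile locus,
\[
P_\alpha=\bigcap_{(u,t)\in V_{1-\alpha}(P)} H(u\le t),
\]
which is equivalent to
\[
P_\alpha=\Bigl\{x\in\R^d \;\Bigm|\; \forall (u,t)\in\R^{d}\times\R:\ \bigl(\|u\|^2=1 \wedge \capvol_P(u,t)=(1-\alpha)\vol_d(P)\bigr)\Rightarrow \langle u,x\rangle\le t\Bigr\}.
\]
Alternatively, I can work directly with the original definition, using the hypothesis $\capvol_P(u,t)\le\alpha\vol_d(P)$ and the conclusion $\langle u,x\rangle\ge t$. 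This version avoids any appeal to the equivalence between the two descriptions, so I would use it, or note that both work.

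The key input is \Cref{lem:cap-volume-semialg}. It says that $\capvol_P$ is piecewise rational on the chambers $\Omega$ of the slicing arrangement $\mathcal H_P$. On each chamber, the graph of $\capvol_P$ is cut out by linear inequalities, $\|u\|^2=1$, and $q_\Omega(u,t)\,s=p_\Omega(u,t)$, where $q_\Omega\neq0$ on $\Omega$. Lower-dimensional cells are handled by continuity. Concretely, I would take the graph over the union of the closures of the chambers and use the adjacent rational expression, which agrees on the common boundary where the denominator is nonzero. Alternatively, I would observe that the graph of a continuous function is the closure of its restriction to a dense semialgebraic set, and semialgebraic sets are closed under closure. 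Either way, the condition $\capvol_P(u,t)\le\alpha\vol_d(P)$ defines a semialgebraic subset of $\R^{d+1}$. Here $\vol_d(P)=\capvol_P(u,t)$ for $t$ large, which is a constant, so it causes no difficulty.

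With this, the defining formula of $P_\alpha$ is a formula with one block of universal quantifiers over $(u,t)$ and a semialgebraic matrix in $(x,u,t)$. Equivalently, $P_\alpha$ is the complement of the projection onto the $x$-coordinates of the semialgebraic set
\[
\{(x,u,t)\mid \|u\|^2=1,\ \capvol_P(u,t)\le\alpha\vol_d(P),\ \langle u,x\rangle<t\}.
\]
By Tarski--Seidenberg (quantifier elimination), projections of semialgebraic sets are semialgebraic, and so are their complements. Hence $P_\alpha$ is semialgebraic.

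The only step requiring care is the justification that the graph of $\capvol_P$ is semialgebraic including on the walls of $\mathcal H_P$, where the rational formula from a single chamber may not be defined. The continuity and closure argument above handles this. If $\alpha$ is irrational, the coefficient $\alpha\vol_d(P)$ is simply a real parameter, which is still allowed in the semialgebraic category. Rationality matters only for the algorithmic statements.
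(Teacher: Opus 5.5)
Your proof is correct and follows essentially the paper's route: write $P_\alpha$ as a universally quantified formula over $(u,t)$ whose matrix is semialgebraic by \Cref{lem:cap-volume-semialg}, then conclude by quantifier elimination. The differences are minor. The paper quantifies over the level set $V_{1-\alpha}(P)$ via \Cref{prop:quantile-locus-semi-alg}, whereas you use the sublevel set $\{\capvol_P\le\alpha\vol_d(P)\}$ from the original definition, and your closure argument makes explicit the wall case that the lemma only sketches.
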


\begin{proof}
Using the previous description, we can write
\[
P_\alpha = \{x\in\mathbb R^d \mid \langle u,x\rangle\le t \text{ for all } (u,t)\in V_{1-\alpha}(P)\}.
\]
Equivalently, $x\in P_\alpha$ if and only if for all $(u,t)\in \Sph^{d-1}\times\mathbb R$ 
\[
\capvol_P(u,t)=(1-\alpha)\vol_d(P) \Rightarrow \langle u,x\rangle\le t.
\]
By \Cref{prop:quantile-locus-semi-alg}, the set $V_{1-\alpha}(P)$ is semialgebraic. Hence, by quantifier elimination, $P_\alpha$ is semialgebraic.
\end{proof}

We next identify the floating body parameter that corresponds to centerpoints. The number $1/(d+1)$ is exactly the depth threshold appearing in the centerpoint theorem.

\begin{proposition}\label{prop:centerpoints-floating}
Let $P\subset\mathbb R^d$ be a full-dimensional convex body. Then the set of centerpoints of $P$ is precisely $P_{\frac{1}{d+1}}$.
\end{proposition}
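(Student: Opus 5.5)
The plan is to prove a slightly more general statement: for every convex body $K$ and every $\alpha\in(0,1)$, the set
\[
\{x\in\R^d \mid \depth_K(x)\ge \alpha\,\vol_d(K)\}
\]
equals the floating body $K_\alpha$. The proposition is then the case $\alpha=\frac{1}{d+1}$, by the definition of centerpoint from the introduction with $\mu$ the Lebesgue measure restricted to $K$. Both sets are described by halfspaces, so I will prove the two inclusions by contraposition, working directly with the defining condition $K_\alpha=\bigcap\{H(u\ge t) \mid \capvol_K(u,t)\le\alpha\vol_d(K)\}$.

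Throughout I will use that a closed halfspace containing $x$ can be written as $H(u\le s)$ with $u\in\Sph^{d-1}$ and $\langle u,x\rangle\le s$. Its $\mu$-mass is $\capvol_K(u,s)$, since hyperplanes are Lebesgue-null. Hence $\depth_K(x)=\inf_{u}\capvol_K(u,\langle u,x\rangle)$, by monotonicity in $t$, exactly as in the proof of \Cref{thm:deep-flats-semi-alg}.

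\emph{Inclusion $\{\text{deep points}\}\subseteq K_\alpha$.} Suppose $x\notin K_\alpha$. Then there is $(u,t)$ with $\capvol_K(u,t)\le\alpha\vol_d(K)$ and $\langle u,x\rangle<t$. Set $s=\langle u,x\rangle<t$. By monotonicity, $\capvol_K(u,s)\le\alpha\vol_d(K)$, but I need this inequality to be strict. This is the step requiring care, and convexity enters here. For a convex full-dimensional body, $t\mapsto\capvol_K(u,t)$ is strictly increasing on $[t_0(u),t_1(u)]$: every slab $\{s<\langle u,\cdot\rangle<t\}$ with $(s,t)\cap(t_0,t_1)\ne\emptyset$ meets the interior of $K$, by convexity and connectedness of the interior, and so has positive volume. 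Since $0<\capvol_K(u,t)\le\alpha\vol_d(K)<\vol_d(K)$, we have $t\in(t_0,t_1)$. Therefore either $s\le t_0$, so that $\capvol_K(u,s)=0$, or $s\in(t_0,t)$ and strict monotonicity applies. In both cases $\capvol_K(u,s)<\alpha\vol_d(K)$. The closed halfspace $H(u\le s)$ contains $x$ and has mass below $\alpha\vol_d(K)$, so $x$ is not $\alpha$-deep.

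\emph{Inclusion $K_\alpha\subseteq\{\text{deep points}\}$.} Suppose $x$ is not $\alpha$-deep. Then there is a closed halfspace $H(u\le s)\ni x$ with $\capvol_K(u,s)<\alpha\vol_d(K)$. By continuity of $t\mapsto\capvol_K(u,t)$, there is $t>s$ with $\capvol_K(u,t)\le\alpha\vol_d(K)$. Then $\langle u,x\rangle\le s<t$, so $x\notin H(u\ge t)$. This halfspace is one of those intersected in the definition of $K_\alpha$, hence $x\notin K_\alpha$.

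Combining the two inclusions with $\alpha=\frac1{d+1}$ gives the claim. The only delicate point is the strictness argument in the first inclusion, which relies on convexity; for non-convex unions with gaps, the cap-volume function can be flat and the argument would need modification.
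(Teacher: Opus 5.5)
Your proof is correct and follows essentially the paper's route: first reduce the depth to $\inf_u \capvol_P(u,\langle u,x\rangle)$ using monotonicity in $t$, then compare this with the defining halfspaces of $P_{\frac{1}{d+1}}$. The paper handles the second step in one line, ``by the definition of the floating body,'' relying implicitly on the uniqueness of $t_\alpha(u)$ for convex bodies (equivalently, on the reformulation $K_\alpha=\bigcap_{(u,t)\in V_\alpha(K)}H(u\ge t)$); you instead make the required strict monotonicity explicit, though in your first inclusion the assertion $0<\capvol_K(u,t)$ need not hold, and the case $\capvol_K(u,t)=0$ should be dispatched separately (it is trivial).
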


\begin{proof}
For $x\in\mathbb R^d$, the halfspace depth of $x$ with respect to $P$ is
\[
\depth_P(x) = \inf_{u\in \Sph^{d-1}}\capvol_P(u,\langle u,x\rangle).
\]
Indeed, among all halfspaces of the form $H(u\le t)$ containing $x$, the smallest cap volume is obtained when $t=\langle u,x\rangle$, because $t\mapsto\capvol_P(u,t)$ is nondecreasing.
Thus $x$ is a centerpoint if and only if
\[
\capvol_P(u,\langle u,x\rangle) \ge \frac{1}{d+1}\vol_d(P) \qquad \text{for all } u\in \Sph^{d-1}.
\]
By the definition of the floating body, this is equivalent to $x\in P_{1/(d+1)}$.
\end{proof}

Combining the previous two propositions gives an effective description of the set of centerpoints of a rational polytope. This is the centerpoint analogue of the ham-sandwich algorithm developed above. We showcase our method in two examples.

\begin{example}\label{ex:floating-body}
Consider the triangle $P$ from \Cref{ex:triangle}. Its floating body $P_{\frac{1}{3}}$, equivalently its set of centerpoints, is the semialgebraic set defined by
\[
4xy+4x+4y+1\ge 0,\quad
4x^2+4xy+4y-1\le 0,\quad
4xy+4y^2+4x-1\le 0.
\]
This description can be obtained from $V_{\frac{2}{3}}(P)$ by quantifier elimination. We performed this in \texttt{Mathematica} in $0.2$ seconds, and plotted the sets in \Cref{fig:triangle-floating-body}.
\end{example}

\begin{figure}
    \centering
    \includegraphics[height=5cm]{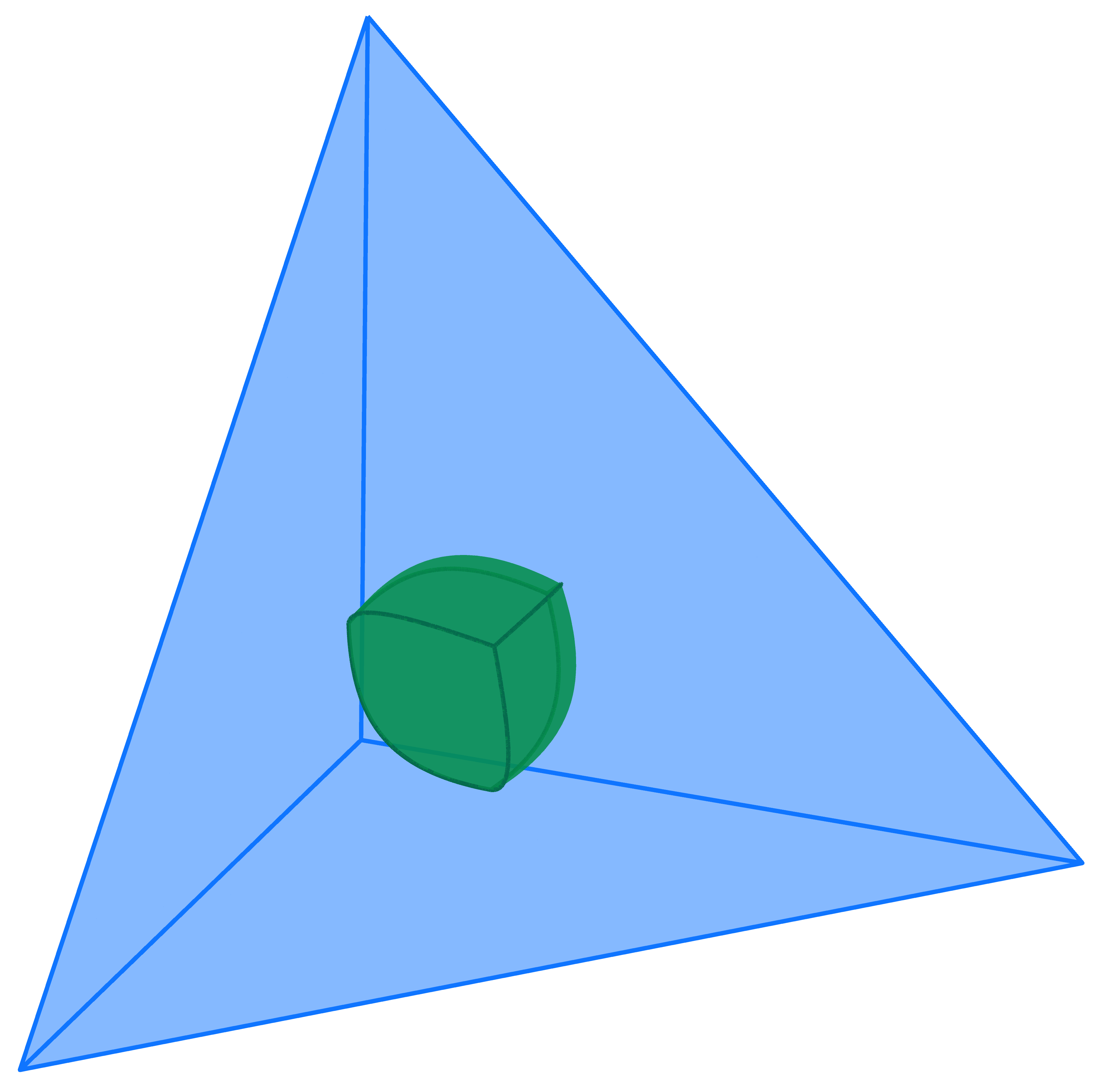}
    \qquad
    \qquad
    \includegraphics[height=5cm]{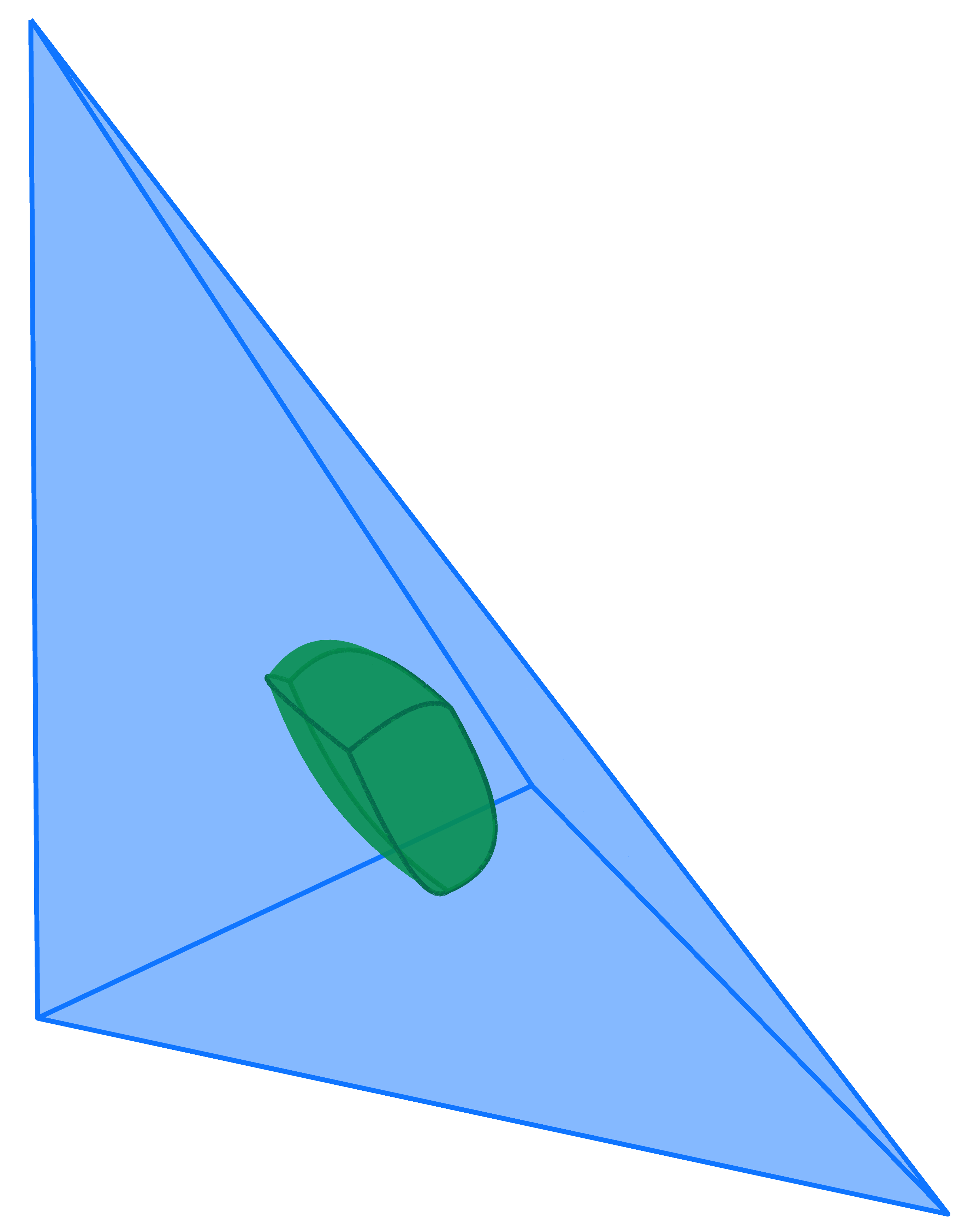}
    \caption{Two points of view of the blue tetrahedron $P$ and its green floating body $P_{\frac{1}{4}}$, from \Cref{ex:deep-lines-two-tetrahedra}.}
    \label{fig:float_body_3d}
\end{figure}

Finally, we remark that the space of $\alpha$-deep flats of $\mathcal{P}=(P_1,\ldots,P_k)$, when the $P_i$ are all convex polytopes, can be interpreted using floating bodies also when $k>1$. This is made explicit in the following example.
\begin{example}\label{ex:deep-lines-two-tetrahedra}
Consider the tetrahedra
\[
P
=
\conv\{(0,0,0),(6,0,0),(0,6,0),(0,0,6)\}
\]
and
\[
Q =
\conv\{(3,3,3),(-3,3,3),(3,-3,3),(3,3,-3)\}.
\]
We describe the set of lines of depth at least $\frac14$ with respect to both tetrahedra, namely $\mathcal F^\ell_{\alpha}(P,Q)$ for $\ell = 2$, $\alpha_1 = \alpha_2=\frac14$. Working in the affine chart of $\graff(1,3)$ consisting of lines whose direction has nonzero third coordinate, we write such an affine line as
\[
L_{m,n,p,q}
=
\{(p+mt,q+nt,t) \mid t\in\R\}.
\]
Consider the $\frac14$-floating bodies of $P$ and $Q$, namely the sets of their centerpoints. $P_{\frac14}$ is the semialgebraic set defined by $(x,y,z)\in P$ and 
\[
xyz \ge 2,\,
yz(6-x-y-z) \ge 2,\,
xz(6-x-y-z) \ge 2,\, xy(6-x-y-z) \ge 2.
\]
\Cref{fig:float_body_3d} displays $P$ in blue and $P_{\frac14}$ in green. Analogously, $Q_{\frac14}$ consists of points $(x,y,z)\in Q$ such that
\begin{gather*}
(3-x)(3-y)(3-z)\ge 2,\,
(x+y+z-3)(3-y)(3-z)\ge 2, \\ (x+y+z-3)(3-x)(3-z)\ge 2,\,
(x+y+z-3)(3-x)(3-y)\ge 2.
\end{gather*}
A line has depth at least $\frac14$ for both $P$ and $Q$ if and only if it intersects both floating bodies. Therefore, in the affine chart above, the set of such lines is the semialgebraic set
\[
\begin{aligned}
\mathcal F^2_{\frac14}(P,Q)
=
\Bigl\{
(m,n,p,q)\in\R^4 \mid
 L_{m,n,p,q}\cap P_\frac14 \neq \emptyset
\text{ and }
L_{m,n,p,q} \cap Q_\frac14 \neq \emptyset
\Bigr\}.
\end{aligned}
\]
This description uses two existential quantifiers hidden in the non-empty conditions. 

\begin{figure}[ht]
    \centering
    \begin{tikzpicture}[scale=1.5]
\definecolor{line_color}{named}{OliveGreen}
\begin{scope}
    \clip (0.6,0.6) rectangle (2.4,2.4);

    \pgfmathsetmacro{\pa}{8/9}
    \pgfmathsetmacro{\pb}{2}
    \pgfmathsetmacro{\pc}{19/9}
    \pgfmathsetmacro{\pd}{1}

    \fill[line_color, opacity=0.25]
      plot[samples=120, smooth, domain=\pa:\pb, variable=\x]
        ({\x}, {6 - \x - sqrt(8/\x)})
      --
      plot[samples=120, smooth, domain=\pb:\pa, variable=\y]
        ({6 - \y - sqrt(8/\y)}, {\y})
      --
      plot[samples=120, smooth, domain=\pc:\pd, variable=\x]
        ({\x}, {- \x + sqrt(8/(3-\x))})
      --
      plot[samples=120, smooth, domain=\pd:\pc, variable=\y]
        ({- \y + sqrt(8/(3-\y))}, {\y})
      -- cycle;

    \draw[very thick, line_color, samples=120, smooth, domain=\pa:\pb, variable=\x]
      plot ({\x}, {6 - \x - sqrt(8/\x)});

    \draw[very thick, line_color, samples=120, smooth, domain=\pb:\pa, variable=\y]
      plot ({6 - \y - sqrt(8/\y)}, {\y});

    \draw[very thick, line_color, samples=120, smooth, domain=\pc:\pd, variable=\x]
      plot ({\x}, {- \x + sqrt(8/(3-\x))});

    \draw[very thick, line_color, samples=120, smooth, domain=\pd:\pc, variable=\y]
      plot ({- \y + sqrt(8/(3-\y))}, {\y});
\end{scope}
\end{tikzpicture}
    \caption{Set of vertical lines in $\mathcal F^2_{\frac14}(P,Q)$, from \Cref{ex:deep-lines-two-tetrahedra}.}
    \label{fig:vertical_deep_lines}
\end{figure}
By restricting ourselves to vertical lines, namely to the two-dimensional subspace of $\R^4$ with $m=n=0$, we can derive the quantifier-free description of the set of vertical lines that belong to $\mathcal F^2_{\frac14}(P,Q)$. This is 
\[
\{(p,q)\in \R^2 \mid p (6 - p - q)^2 \ge 8, q (6 - p - q)^2 \ge 8, (3 - p) (p + q)^2 \ge 8, (3 - q) (p + q)^2 \ge 8 \},
\]
and it is displayed in \Cref{fig:vertical_deep_lines}. 
In particular, one can check that at the point $(0,0,3/2,3/2)$ the inequalities defining $\mathcal F^2_{\frac14}(P,Q)$ are all strict, hence the latter is full-dimensional.
\end{example}

\section{Conclusions}

We presented a methodology for solving several fair or proportional partition problems when we partition polytopal measures using one hyperplane cut. Our method expresses cutting conditions as semialgebraic constraints and then uses the power of algebraic computation to give polynomial-time algorithms (for fixed dimension). We provide efficient semialgebraic descriptions of ``quantile bodies'' or ``$\alpha$-partition loci'' with controlled degree and number of inequalities in terms of the complexity of the input polytopes. In our setup of exact algebraic computation one can ask for a rational/algebraic description of a bisecting hyperplane and receive verifiable certificates as well. Despite this, several open problems remain:
\begin{itemize}
\item Are there other families of measures that are particularly amenable to computation? Here we showed that partitioning polytopal measures is semialgebraic and thus efficiently computable. Since polytopes can approximate other convex bodies, is the extension by approximation efficient?

\item Practical computation was possible in small instances. A natural future direction is to obtain faster practical algorithms for polytopal ham-sandwich cuts or centerpoints. If approximation is sufficient, one can investigate quantified numerical approximations of the semialgebraic sets presented here.

\item A different generalization of these problems asks for which triples $(d,m,k)$ $m$ polytopal measures in $\mathbb{R}^d$ can be equipartitioned by $k$ hyperplanes. Here we solve the case $k=1$;  the case $k>1$ is challenging. Concretely, what is the computational complexity of finding such partitions when existence is guaranteed? For various hardness results in this direction, see \cite{Schnider22}.

\item Given a polytope, it is also interesting to study the set of points lying on hyperplanes that cut off a prescribed fraction of volume, or the envelope of those hyperplanes. In a forthcoming paper we will consider the geometry of $\alpha$-quantile bodies for polytopes and the envelopes of those hyperplanes.
\end{itemize}

\printbibliography

\vskip 1cm
\noindent
\textsc{Marie-Charlotte Brandenburg}\\
\textsc{Ruhr-Universit\"at Bochum}\\
\url{marie-charlotte.brandenburg@rub.de}\\

\noindent
\textsc{Jesús A. De Loera}\\
\textsc{University of California, Davis}\\
\url{jadeloera@ucdavis.edu}\\

\noindent
\textsc{Chiara Meroni}\\
\textsc{ETH Institute for Theoretical Studies}\\
\url{chiara.meroni@eth-its.ethz.ch}

\end{document}